\newtheorem*{thmbw}{Blackwell's Theorem}
\theoremstyle{plain}
\newtheorem{thm}{Theorem}%[section]
\newtheorem*{thm*}{Theorem}%[section]
\newtheorem{lem}{Lemma}
\newtheorem{prop}{Proposition}
\newtheorem{cor}{Corollary}
\newtheorem{defn}{Definition}
\newtheorem{ex}{Example}
\theoremstyle{remark}
\newtheorem*{rem}{Remark}
\newcommand{\bu}{\mathbf{u}}
\newcommand{\po}{\mathcal{H}}
\newcommand{\epi}{\text{epi}}
\newcommand{\ext}{\text{ext}}
\newcommand{\co}{\text{co}}
\newcommand{\argmax}{\mathop{\mathrm{argmax}}}
\newcommand{\argmin}{\mathop{\mathrm{argmin}}}
\begin{document}
 
\title{Robust Aggregation of Correlated Information\footnote{This paper has previously been circulated under the title ``Robust Merging of Information.'' We are grateful to Nageeb Ali, Alex Bloedel, Marcos Fernandes, Alex Frankel, Marc Henry, Nicole Immorlica, Asen Kochov, Jiangtao Li, Elliot Lipnowski, George Mailath, Pietro Ortoleva, Collin Raymond, Fedor Sandomirskiy, Shamim Sinnar, Rakesh Vohra, Leeat Yariv, and participants at various conferences and seminars for valuable comments. We also thank Tiago Botelho for his excellent research assistance.}}

\author{Henrique de Oliveira\thanks{S\~{a}o Paulo School of Economics - FGV. Email: \href{mailto:henrique.oliveira@fgv.br}{henrique.oliveira@fgv.br} }  \and Yuhta Ishii\thanks{ Pennsylvania State University. Email: \href{mailto:yxi5014@psu.edu}{yxi5014@psu.edu} } \and Xiao Lin\thanks{University of Pennsylvania. Email: \href{mailto:xiaolin7@sas.upenn.edu}{xiaolin7@sas.upenn.edu}}
}
\date{September 10, 2024} %

\maketitle

\begin{abstract}
An agent makes decisions based on multiple sources of information. In isolation, each source is well understood, but their correlation is unknown. We study the agent's robustly optimal strategies\,---\,those that give the best possible guaranteed payoff, even under the worst possible correlation. With two states and two actions, we show that a robustly optimal strategy uses a single information source, ignoring all others. In general decision problems, robustly optimal strategies combine multiple sources of information, but the number of information sources that are needed has a bound that only depends on the decision problem. These findings provide a new rationale for why information is ignored. 
\end{abstract}

\thispagestyle{empty}
\newpage
\thispagestyle{empty}
\setcounter{tocdepth}{2}
\tableofcontents
\newpage

\setcounter{page}{1}

\section{Introduction}
From the mundane to the important, most decisions are made with the aid of many information sources. Treatment decisions can be made by consulting multiple doctors. Retirement plans can follow the advice of numerous financial experts.  These different information sources are often correlated, as doctors may base their recommendations on the same study and financial analysts may have incentives to echo each other. 

Understanding how multiple sources are correlated is hard. In a scientific study, for example, determining the correlation among variables requires an exponentially increasing sample size\,---\,known as the ``curse of dimensionality.'' This difficulty brings the risk of misinterpreting correlations, which can lead to flawed inferences and suboptimal decisions. Therefore, an agent may look to make decisions that do not leave them vulnerable to misspecifications in correlation.

In this paper, we assume that an agent fully understands each information source in isolation, but has no knowledge of the correlations among different information sources. To guard against this lack of knowledge, the agent chooses a strategy that performs well even under the worst possible correlation structure. As we will see, this robustness concern could lead the agent to ignore information.

As an example, consider the following hypothetical scenario: The Centers for Disease Control (CDC) is setting guidelines for administering a new Covid treatment to a patient population that has equal prior probabilities of having either Covid or the Flu. The treatment is designed for Covid, so it is beneficial for Covid patients, but only causes side effects for those with the Flu. The payoff matrix is given in \cref{table:Covid-1}, where the payoffs from no treatment are normalized to zero. 

\begin{table}[htp]
\centering
\begin{tabular}{|c|c|c|}
\hline
 & Treatment & No Treatment\\
\hline
 Covid &  \phantom{-}30 & 0 \\
\hline
 Flu & -20 & 0 \\
\hline
\end{tabular}
\caption{Payoffs from the Treatment}
\label{table:Covid-1}
\end{table}

Since patients with different diseases may develop different symptoms with different probabilities, these symptoms can serve as informative signals to guide treatment decisions. Suppose there are two well-understood studies: one describes the relationship between the diseases (Covid/Flu) and the Cough symptom; the other describes the relationship between the diseases and the Fever symptom. These relationships, represented as Blackwell experiments, are shown in \cref{table:Covid-2}, where ``$+$'' denotes the presence of a symptom and ``$-$'' denotes its absence.

\begin{table}[htp]
\begin{minipage}{0.5\textwidth}
\centering
\begin{tabular}{|c|c|c|}
\hline
 & $+$ & $-$ \\
\hline
Covid & $0.9$ & $0.1$ \\
\hline
Flu & $0.5$ & $0.5$ \\
\hline
\end{tabular}
\captionof*{table}{Cough}
\end{minipage}
\hfill 
\begin{minipage}{0.5\textwidth}
\centering
\begin{tabular}{|c|c|c|}
\hline
 & $+$ & $-$ \\
\hline
Covid & $0.5$ & $0.5$ \\
\hline
Flu & $0.1$ & $0.9$ \\
\hline
\end{tabular}
\captionof*{table}{Fever}
\end{minipage}
\caption{Known Relationships between Diseases and Symptoms}
\label{table:Covid-2}
\end{table}

However, no studies have jointly examined both Cough and Fever symptoms. For instance, we do not have data on how likely it is that a Covid patient will simultaneously develop both Cough and Fever symptoms. Lacking knowledge on correlation, the goal is to design a treatment guideline that utilizes the available information while being robust to all possible correlations.

A simple strategy that protects against the hidden correlation is to base the treatment decision on only one symptom. If using only the Cough symptom, the treatment should be administered if and only if the patient has a positive Cough symptom. This strategy guarantees a value of $\frac{1}{2}[0.9\times 30 +0.5\times (-20)]=8.5$ regardless of the correlation. Similarly, the CDC could also base the treatment decision only on the Fever symptom, which guarantees a value of $\frac{1}{2}[0.5\times 30 +0.1\times (-20)]=7$. Since the strategy using the Cough symptom guarantees a higher value, we call it a \textit{best-source strategy}, which selects a \textit{single} information source\,---\,the best one when considered individually\,---\,and best responds to it.

While the best-source strategy has the virtue of being simple, it completely forfeits the potential benefits from observing multiple information sources. Could the CDC do better by using a more sophisticated treatment strategy that makes use of both symptoms? \autoref{thm:binarybinary} says the answer is \textit{no}: \emph{a best-source strategy is always robustly optimal in any decision problem involving two states and two actions}. Moreover, whenever the best information source is unique, e.g. the Cough symptom in this example, the best-source strategy is the unique robustly optimal strategy.%\,---\,any other strategy would perform worse under some correlation structure.

With more than two actions, best-source strategies are no longer always optimal, and robustly optimal strategies will typically use multiple information sources. To illustrate, let us revise the example and suppose now there are two treatments: one is the previous treatment, designed for Covid, and the other is an additional treatment, designed for the Flu. The payoff from each treatment is given in \cref{table:Covid-3}; the total payoff is the sum of the payoffs from the two treatments.

\begin{table}[htp]
\begin{minipage}{0.5\textwidth}
\centering
\begin{tabular}{|c|c|c|}
\hline
 & $T_1$ & $N_1$\\
\hline
 Covid &  \phantom{-}30 & 0 \\
\hline
 Flu & -20 & 0 \\
\hline
\end{tabular}
\captionof*{table}{Treatment 1}
\end{minipage}
\hfill 
\begin{minipage}{0.5\textwidth}
\centering
\begin{tabular}{|c|c|c|}
\hline
 & $T_2$ & $N_2$\\
\hline
 Covid &  -20 & 0 \\
\hline
 Flu & \phantom{-}30 & 0 \\
\hline
\end{tabular}
\captionof*{table}{Treatment 2}
\end{minipage}
\caption{Payoffs from Two Treatments}
\label{table:Covid-3}
\end{table}

The CDC now chooses among four actions, in the form of $\{T_1,N_1\}\times \{T_2,N_2\}$, specifying whether to administer each of the treatments.  Again, a simple strategy that is not vulnerable to correlations is to base the treatment decision on only one symptom. It can be easily checked that using either the Cough or the Fever symptom alone guarantees a value of $8.5+7=15.5$. However, the CDC can do better by basing the Treatment 1 decision on the Cough symptom and the Treatment 2 decision on the Fever symptom, as described in \cref{table:Covid-4}. This strategy, which uses the best information source for each treatment, guarantees a value of $8.5+8.5=17$ regardless of the actual correlations between the information sources.

\begin{table}[htp]
\centering
\begin{tabular}{|c|c|c|}
\hline
  & Fever$_+$ & Fever$_-$\\
\hline
Cough$_+$ &  $T_1+T_2$ & $T_1$\\
\hline
Cough$_-$ & $T_2$ & No Treatment\\
\hline
\end{tabular}
\caption{Using Information from Both Symptoms}
\label{table:Covid-4}
\end{table}

A key property in the decision problem in the above example is the additive separability of payoffs across the two treatments. Indeed, for any decision problem consisting of a collection of binary-state binary-action  subproblems whose utilities are summed, which we call a \emph{separable problem}, we show that a robustly optimal strategy is to use the best-source strategy for each subproblem separately. 

The separability property may seem rather restrictive, but surprisingly, when the state is binary, \textit{every decision problem can be written as a separable problem}. Based on this observation, \cref{thm:binary-general} provides a general construction of robustly optimal strategies for every binary-state decision problem.

The construction in \cref{thm:binary-general} starts by removing all dominated actions and ordering the remaining $n$ actions according to how much utility they generate in the first state.  The decision problem is then decomposed into $n-1$ binary-action decision problem, each specifying a local comparison between a pair of consecutive actions. These $n-1$ decomposed problems are combined into one separable problem, which proves to be equivalent to the original decision problem. In this separable problem, each local comparison uses a best-source strategy to determine the recommended action. Finally, a robustly optimal strategy can be constructed by mapping the profile of recommendations in each local comparison into a (possibly mixed) action in the original problem. Such a robustly optimal strategy uses an information source if and only if it is the best source in one of the local comparisons, and so it uses multiple information sources precisely when the best sources differ across local comparisons. Thus, the number of information sources used cannot exceed $n-1$, the number of decomposed problems.

With three or more states, we do not have a full characterization of the robustly optimal strategy. However, as in Theorems~\ref{thm:binarybinary} and \ref{thm:binary-general}, \autoref{thm:generalstate} establishes a bound, $N$, such that there is always a robustly optimal strategy that uses at most $N$ information sources. Again, this bound depends only on the decision problem, meaning that as the number of information sources grows large, the fraction of information sources used converges to $0$.

Together, Theorems \ref{thm:binarybinary}, \ref{thm:binary-general}, and \ref{thm:generalstate} reveal a common theme: the agent tends to ignore some freely available information. Ignorance of information is well-documented, with existing explanations often attributing this behavior to costs or psychological distortions (see \citet{handel2018frictions} for a detailed discussion). Our results offer a different and less recognized rationale: ignorance of information can lead to more robust decisions when there is uncertainty about the correlations among various information sources. This explanation has distinct counterfactual implications. For instance, an agent who finds it costly to acquire or process information would become more informed as stakes are raised, but one who is concerned with correlation robustness would not react to such an incentive.

To isolate the effect of ambiguity about the correlation among information sources, our model makes two assumptions: First, the agent has no knowledge about the correlation of the different information sources. Second, the agent possesses perfect knowledge of each of the information sources in isolation.  In reality, the situation might be less extreme\,---\,there might be some knowledge about the correlations or some ambiguity about the marginal sources. \cref{sec:extension:common_source} shows that our results extend straightforwardly to certain additional knowledge about the correlations among information sources. \cref{sec:extension:marginal} extends Theorem 1 to a setting in which the agent also faces ambiguity about the marginal information sources.

The rest of the paper is organized as follows: \cref{section-model} introduces the formal model. \cref{section-prelim} establishes preliminary results that will be useful throughout the paper. \cref{section:binary state,section:general} consider the binary-state and general-state environments, respectively. \cref{section:discussion} discusses extensions. \cref{section:conclusion} concludes. The remainder of this introduction places our contribution within the context of the broader literature.

\paragraph{Related Literature:}

Our paper studies robust decision making under uncertain correlations among information sources. The practice  of finding robust strategies traces back at least to \citet{wald1950statistical}. The worst-case approach we adopt is in line with the literature on ambiguity aversion \citep{gilboa1989maxmin}.  In particular, a recent experiment by \citet{epstein2019ambiguous} documents aversion to ambiguity on correlation structures.

Our approach to modeling information aggregation is closely related to the robust forecast aggregation literature, which seeks to combine multiple forecasts into a single prediction without detailed knowledge of the underlying information structure.\footnote{This literature often assumes that only forecasts\,---\,experts' beliefs about the state\,---\,are observable, instead of the raw information informing those beliefs, as in our model. See \cref{sec:expert-opinion} for further discussion of this assumption.} \citet*{ArieliE12135} first proposed an adversarial framework for combining forecasts, and considered various types of ambiguity, such as when one information source is Blackwell more informative than the other, but the agent does not know which.  Moreover, they study a specific decision problem where the agent aims to minimize the quadratic loss to the true state. By contrast, we focus solely on ambiguity in the correlation structure and consider general decision problems. Our ambiguity set is also closely related to that in \citet{levy2020combining}, who consider both the correlation among signals and the correlation across different dimensions of the a multi-dimensional state space. They adopt an interim approach, where ambiguity arises after the signals have been realized. By contrast, our approach is ex-ante, where the worst-case correlation does not vary with signal realizations. 

The agent in our model has a maxmin objective\,---\,evaluating each strategy by its worst-case payoff across all correlation structures.  \citet*{arieli2023universally} adopts a complementary approach, minmax regret, where the agent concerns the largest opportunity loss relative to what she could have achieved if she knew the correlation and best responded accordingly. They show that when the marginal experiments are symmetric, following a single random information source is robustly optimal under both robustness paradigms.

A crucial step in our analysis involves understanding the joint informativeness of correlated information sources. \citet*{borgers2013signals} first introduce the notions of complementarity and substitutability between two information structures and provide an explicit characterization. \citet{cheng2024diversity} further explore the relationship between joint informativeness of experts' recommendations and their chance of disagreement. In contrast, our analysis specifically  focuses on those least informative correlation structures, where the information sources can be viewed as extreme substitutes for each other. 

Several studies have investigated learning from multiple information sources with known correlations. \cite{liang2020complementary} examine a social learning setting where agents' information is complementary. \cite{ichihashi2021economics} looks at how a firm purchases data from consumers with potentially correlated information source.  \citet*{liang2022dynamically} study an agent's optimal dynamic allocation of attention to multiple correlated information sources. Finally, \citet*{brooks2024comparisons} explores the comparison of experiments with known correlations and characterize their ranking that is robust to any additional information.

Robustness to correlations has also been studied in other contexts such as mechanism design. \citet{carroll2017robustness} studies a multi-dimensional screening problem, where the principal knows only the marginals of the agent's type distribution, and designs a mechanism that is robust to all possible correlation structures. \citet{HeLi2020} and \citet{zhang2021correlation} study an auctioneer's robust design problem when selling an indivisible good, concerning the correlation of values among different agents.

\section{Model}\label{section-model}

An agent faces a decision problem $\Gamma=(\Theta,\mu_0,A,\rho)$, with a finite state space $\Theta$, a prior $\mu_0\in\Delta\Theta$, a finite action space $A$, and a utility function $\rho:\Theta\times A\rightarrow \mathbb{R}$. To later simplify notation, we will later refer to decision problems as $(A,u)$, where $u(\theta,a)=\mu_0(\theta)\rho(\theta,a)$ is the prior-weighted utility function.

The agent has access to $m$ information sources, denoted by $\{P_j\}_{j=1}^m$. Each source is a \textbf{marginal experiment}, $P_j:\Theta\rightarrow\Delta Y_j$, mapping each state to a distribution over some finite signal set $Y_j$. Let $\mathbf{Y}=Y_1\times \cdots \times Y_m$ denote the set of all possible profiles of signal realizations, with typical element $\mathbf{y}=(y_1,\ldots,y_m)$. The agent can observe the signals from all marginal experiments, $\{P_j\}_{j=1}^m$, but does not have detailed knowledge of the joint. Thus, the agent conceives of the following set of \textbf{joint experiments}:
	\[
	\mathcal{J}(P_1,...,P_m) =\left\{
	P:\Theta\rightarrow \Delta(\mathbf{Y}):
	\sum_{y_{-j}}P(y_1,\ldots,y_m|\theta)=P_j(y_j|\theta)\text{ for all }\theta,j,y_j
	\right\}.
	\]

A strategy for the agent is a mapping, $\sigma:\mathbf{Y}\rightarrow \Delta (A)$, and the set of all strategies is denoted by $\Sigma$.
The agent's problem is to maximize her expected payoff considering the worst possible joint experiment:
	\[
	V(P_1, \ldots , P_m; (A,u)):= \max_{\sigma\in\Sigma} \min_{P \in \mathcal{J}(P_1, \ldots , P_m)} \sum_{\theta\in\Theta}\sum_{\mathbf{y} \in \mathbf{Y}} P(\mathbf{y}|\theta)u(\theta,\sigma(\mathbf{y})).
	\]
Whenever there is no confusion about the relevant decision problem, we omit $(A,u)$ from the argument of $V$.  We call a solution to the problem a \textbf{robustly optimal} strategy.
	
If $m=1$, the agent observes only a single experiment $P:\Theta\rightarrow \Delta (Y)$ and $V(P)$ is the classical value of a Blackwell experiment. In this case, a robustly optimal strategy is just an optimal strategy for a Bayesian agent.

\section{Preliminaries}\label{section-prelim}

\subsection{The Blackwell Order}\label{sec:Blackwell}

We will use the Blackwell order of experiments throughout the paper. For the sake of completeness, we briefly review it in this subsection. 
\begin{defn}
$P:\Theta\rightarrow\Delta(Y)$ is more informative than $Q:\Theta\rightarrow\Delta(Z)$  if, for every decision problem, we have the inequality $V(P)\geqslant V(Q)$. We also say that $P$ Blackwell dominates $Q$.
\end{defn}
We say that two experiments are {Blackwell equivalent} if they Blackwell-dominate each other.

There are two other natural ways of ranking experiments. The first uses the notion of a \emph{garbling}.
\begin{defn}
$Q:\Theta\rightarrow\Delta(Z)$  is a garbling of $P:\Theta\rightarrow\Delta(Y)$ if there exists a function $g:Y\rightarrow \Delta(Z)$ (the ``garbling") such that 
$
Q(z|\theta)=\sum_{y}g(z|y)P(y|\theta).
$
\end{defn}
Thus $Q$ is a garbling of $P$ when one can replicate $Q$ by ``adding noise'' to the signal generated from $P$. The second ranking uses the feasible state-action distributions.
\begin{defn}
Given a set of actions $A$ and an experiment $P:\Theta\rightarrow\Delta(Y)$, the feasible set of $P$ is
\[
\Lambda_P(A)=\left\{\lambda:\Theta\rightarrow \Delta A\:\Big|\:\lambda(a|\theta)=\sum_y \sigma(a|y)P(y|\theta) \text{ for some } \sigma:Y\rightarrow \Delta(A)\right\}.
\]
\end{defn}

The feasible set of an experiment specifies what conditional action distributions can be obtained by some choice of strategy $\sigma$. One might then say that more information allows for a larger feasible set. 

Blackwell's Theorem states that these rankings of informativeness are equivalent.\footnote{For a proof, see e.g.  \citet{blackwell1953equivalent} or \citet{de2018blackwell}.}

\begin{thmbw}
\label{thm:Blackwell}
The following statements are equivalent
\begin{enumerate}
    \item $P$ is more informative than $Q$;
    \item $Q$ is a garbling of $P$;
    \item For all sets $A$, $\Lambda_Q(A)\subseteq \Lambda_P(A)$.
\end{enumerate}
\end{thmbw}
In addition, when $|\Theta|=2$, Theorem 10 in \cite{blackwell1953equivalent} shows that the above statements are also equivalent to 
\begin{enumerate}
    \item[\textit{4}.] \emph{For all sets $A$ with $|A|=2$, $\Lambda_Q(A)\subseteq \Lambda_P(A)$.}
\end{enumerate}

Note that all sets $A$ with the same cardinality give essentially the same set $\Lambda_P(A)$, so condition (3) could equivalently be stated as as follows: for every $n\in \mathbb{N}$, we have $\Lambda_Q(\{1,\ldots,n\})\subseteq \Lambda_P(\{1,\ldots,n\})$. Similarly, condition (4) can be stated as $\Lambda_Q(\{1,2\})\subseteq \Lambda_P(\{1,2\})$. To simplify notation, when $|A|=2$, we will omit $A$ in the notation, simply writing $\Lambda_P$.

Condition (4) is particularly useful as it offers a simple graphical representation of Blackwell experiments when $|\Theta|=2$. \autoref{zonotope}(a) illustrates this using the cough symptom from the introduction (see \autoref{table:Covid-2}). To characterize $\Lambda_P$, it suffices to specify the probability of taking one of the two actions. The $x$-axis denotes the probability of taking this action in state $1$, and the $y$-axis denotes the probability in state $2$. Clearly $(0,0),(1,1)\in \Lambda_P$ for all $P$, because these two points represent taking a constant action regardless of the signal realization. With the information obtained from the Blackwell experiment, additional points can be obtained. For example, the point $(0.1,0.5)$ in \autoref{zonotope}(a) can be achieved if the decision-maker chooses this action precisely when the patient does not have a cough symptom. Symmetrically, the decision-maker could choose the same action precisely when the agent has a cough symptom, which yields the point $(0.9,0.5)$. Such pure strategies give us the extreme points of the polytope $\Lambda_P$ and the possibility of randomization convexifies the set.  Thus, $\Lambda_P$ is a convex and symmetric\footnote{By symmetric we mean if $\lambda \in \Lambda_P$, $(1,1)-\lambda\in \Lambda_P$.} polytope in $[0,1]^2$, corresponding to the entire shaded area. Conversely, as shown in \cite{bertschinger2014blackwell}, any convex and symmetric polytope in $[0,1]^2$ correspond to $\Lambda_P$ for some $P$.

\begin{figure}[ht]
	\centering
\subfigure[An example of $\Lambda_P(A)$ with $|\Theta|=|A|=2$]{
	\begin{tikzpicture}[domain=0:3, scale=4.5, ultra thick]

	\draw[<->] (0,1.1) node[above]{$\lambda(\cdot|\theta=2)$} -- (0,0)node[below]{\footnotesize(0,0)}-- (1.2,0) node[below,xshift=-5]{$\lambda(\cdot|\theta=1)$};
	  
	\draw[orange] (0,0)--(0.1,0.5)--(1,1)--(0.9,0.5)--(0,0);
	\fill[orange!20] (0,0)--(0.1,0.5)--(1,1)--(0.9,0.5)--(0,0);
	\draw (1,1)node[right]{\footnotesize(1,1)};

	\draw (0.1,0.5)node[right,yshift=-3]{\footnotesize$(0.1,0.5)$};
    \draw (0.9,0.5)node[below,xshift=17]{\footnotesize(0.9,0.5)};
	\end{tikzpicture}   
}
\hspace{0.2in}
\subfigure[$\Lambda_{\overline{P}}$ as the convex hull of $\Lambda_{P_1}\cup \Lambda_{P_2}$]
{
	\begin{tikzpicture}[domain=0:3, scale=4.5, ultra thick]    

	\fill[orange!20] (0,0)node[below,white]{(0,0)}--(0.1,0.5)--(1,1)--(0.9,0.5)--(0,0);
	\draw[orange] (0.95,0.7)node[right]{$\Lambda_{P_1}$};
	
	\draw[cyan,thin] (0,0)--(0.5,0.9)--(1,1)--(0.5,0.1)--(0,0);
	\fill[pattern=vertical lines, pattern color=cyan] (0,0)--(0.5,0.9)--(1,1)--(0.5,0.1)--(0,0);
	\draw[cyan] (0.7,0.95)node[above]{$\Lambda_{P_2}$};
	
	\draw[gray,dotted,line width=3] (0,0)--(0.1,0.5)--(0.5,0.9)--(1,1)--(0.9,0.5)--(0.5,0.1)--(0,0);
	\draw[gray] (0.3,0.7)node[above,xshift=-6]{$\Lambda_{\overline{P}}$};
	
	\end{tikzpicture}  
}
	
	\caption{}
	\label{zonotope}
\end{figure}

\subsection{The Blackwell Supremum}\label{subsection:supremum}

Having reviewed the Blackwell order, we now turn our attention to a crucial concept that will be used extensively in our analysis: the Blackwell supremum.

\begin{defn}
Let $P_1, P_2, \ldots , P_m$ be arbitrary Blackwell experiments. We say that $\overline{P}$ is the \textbf{Blackwell supremum} of $P_1, P_2, \ldots , P_m$ if
\begin{enumerate}
    \item $\overline{P}$ is more informative than $P_1, P_2, \ldots , P_m$;
    \item If $Q$ is more informative than $P_1, P_2, \ldots , P_m$, then $Q$ is also more informative than $\overline{P}$.
\end{enumerate}
\end{defn}
By definition, if there are two Blackwell suprema, they must Blackwell dominate each other.  This means that any two Blackwell suprema must be Blackwell equivalent and so Blackwell suprema, if they exist, are unique up to Blackwell equivalence.

Furthermore, when the state space is binary, the Blackwell supremum always exists and can be characterized using the feasible set, as illustrated in \autoref{zonotope}(b). From Blackwell's theorem, for any $P'$ that is more informative than $P_1,...,P_m$, the corresponding feasible set $\Lambda_{P'}$ must contain $\Lambda_{P_1}, \ldots , \Lambda_{P_m}$. Since the feasible set is always convex, $\Lambda_{P'}$ must also contain $\co(\Lambda_{P_1}\cup\Lambda_{P_2} \cdots \cup \Lambda_{P_m})$. Moreover, the set $\co(\Lambda_{P_1}\cup\Lambda_{P_2} \cdots \cup \Lambda_{P_m})$ is convex and symmetric, and so it corresponds to some Blackwell experiment $\overline{P}$, which is thus the least information Blackwell experiment that dominates $P_1,...,P_m$\,---\,the Blackwell supremum. This observation yields the following lemma:\footnote{For a proof, see Proposition 16 in \cite{bertschinger2014blackwell}.}

\begin{lem}\label{lem:blackwellsup}
   When $|\Theta|=2$, the Blackwell supremum always exists. An experiment $\overline{P}$ is the Blackwell supremum of $P_1,P_2, \ldots , P_m$ if and only if $\Lambda_{\overline{P}}=\co(\Lambda_{P_1}\cup\Lambda_{P_2} \cdots \cup \Lambda_{P_m})$.
\end{lem}

It is useful to note that the above lemma holds specifically for when $|\Theta| = 2$.  When $|\Theta|\geq 3$, a Blackwell supremum may not exist, as illustrated in example 18 of \cite{bertschinger2014blackwell}. %The proof of existence fails because in a higher dimensional space, $co(\Lambda_{P_1}\cup\Lambda_{P_2} \cdots \cup \Lambda_{P_m})$ might not correspond to any Blackwell experiment.

\subsection{Nature's MinMax Problem}\label{sec:nature-minmax}

Most of our focus will be on the robustly optimal strategies for the agent, but it will be helpful to first understand Nature's MinMax problem. Since the objective function is linear in both $\sigma$ and $P$, and the choice sets of $\sigma$ and $P$ are both convex and compact, the minimax theorem \citep{sion1958general} implies that
\begin{align}\label{eq:nature-minmax}
    V(P_1, \ldots , P_m)&= \min_{P \in \mathcal{J}(P_1, \ldots , P_m)} \max_{\sigma\in\Sigma}  \sum_{\theta\in\Theta}\sum_{(y_1, \ldots , y_m) \in \mathbf{Y}} P(y_1, \ldots , y_m|\theta)  u(\theta,\sigma(y_1, \ldots , y_m)) \nonumber\\
                    &= \min_{P \in \mathcal{J}(P_1, \ldots , P_m)} V(P)
\end{align}

That is, the value of the agent's maxmin problem equals the value of a minmax problem where Nature chooses an experiment in the set $\mathcal{J}(P_1, \ldots , P_m)$ to minimize a Bayesian agent's value in the decision problem. 

Observe that every experiment in $\mathcal{J}(P_1, \ldots , P_m)$ must be more informative than every $P_j$, since the projection onto the $j$th coordinate is a garbling. So if we let $\mathcal{D}(P_1, \ldots , P_m)$ denote the set of Blackwell experiments that dominates $P_1,...,P_j$, then $\mathcal{J}(P_1, \ldots , P_m)\subseteq \mathcal{D}(P_1, \ldots , P_m)$.\footnote{Technically, if we allow any finite set to be a signal space, $\mathcal{D}$ is not a set in the strict set-theoretical sense. We can resolve this issue by fixing a large enough universe $U$ of signals, such that all sets $Y_j\subset U$. For our purposes $U=\mathbb{N}$ is large enough.} The set $\mathcal{D}(P_1, \ldots , P_m)$ is in general a larger set, because not every experiment that dominates $P_1,...,P_m$ can be represented as a joint experiments with marginals $P_1,...,P_m$.\footnote{For a simple example, consider two experiments $P_1$ and $P_2$ whose signal spaces $Y_1$ and $Y_2$ are both singleton. Then $\mathcal{J}(P_1,P_2)$ contains only the completely uninformative experiment while $\mathcal{D}(P_1,P_2)$ contains all Blackwell experiments.} However, the next lemma shows that relaxing Nature's problem to choosing an experiment from the set $\mathcal{D}(P_1, \ldots , P_m)$ does not change the value of the problem.

\begin{lem}\label{lem:relaxed-problem}
\begin{equation}\label{eq:relaxed}
       V(P_1, \ldots , P_m)=  \min_{P \in \mathcal{D}(P_1, \ldots , P_m)} V(P) 
\end{equation}
%=\min_{P \in \mathcal{J}(P_1, \ldots , P_m)} V(P)
\end{lem}
\begin{proof}
 See   \cref{proof:blackwellsup}.  
\end{proof}

The idea underlying \cref{lem:relaxed-problem} is that in the relaxed problem above, Nature could restrict attention to the experiments that are Blackwell minimal\,---\,those that do not Blackwell dominate any other experiment in $\mathcal{D}(P_1, \ldots , P_m)$. Additionally, any Blackwell minimal element in this set can be represented as a joint experiment in $\mathcal{J}(P_1, \ldots , P_m)$, as shown in \cref{proof:blackwellsup}. 

\cref{lem:relaxed-problem} is particularly useful when the state is binary. Under binary states, the Blackwell supremum $\overline{P}$ of $P_1,...,P_m$ exists, and it is the unique (up to Blackwell equivalence) Blackwell minimal element in $\mathcal{D}(P_1, \ldots , P_m)$. Therefore, $\overline{P}$ solves \eqref{eq:relaxed} regardless of the decision problem, which yields the following corollary.

\begin{cor} \label{cor:value-supremum}
\label{value}
When $|\Theta|=2$,
    $$V(P_1,...,P_m)=V(\overline{P}(P_1,...,P_m))$$
    where $\overline{P}(P_1,...,P_m)$ is the Blackwell supremum of experiments $\{P_1,...,P_m\}$.
\end{cor}
Thus, in binary-state decision problems, the agent's value from using a robust strategy is the same as the value she would obtain if she faced a single experiment\,---\,the Blackwell supremum of all marginal experiments. Moreover, the Blackwell supremum depends only on the marginal experiments, and not on the particular decision problem.

\section{Binary State Environment}\label{section:binary state}

In this section, we present our results for the special case in which $|\Theta| = 2$.  We characterize both the robustly optimal strategies and values in this environment.

\subsection{Binary-State Binary-Action Problems}

As seen in the introductory example, one simple strategy that generates a robust value independent of the correlations among the marginal information sources is to choose exactly one marginal experiment from $\{P_1, \ldots ,P_m\}$ and play the optimal strategy that uses that information alone, ignoring the signal realizations of all other experiments. By choosing the marginal experiment optimally, the agent achieves an expected payoff of $\max_{j = 1, \ldots , n} V(P_j)$, regardless of the actual joint experiment $P \in \mathcal{J}(P_1, \ldots , P_m)$. We call such a strategy a \textit{best-source strategy}.

It is sometimes clear that a best-source strategy is robustly optimal. Suppose, for example, that we have two information sources, $P_1$ and $P_2$, and that $P_1$ is more informative than $P_2$. We can then consider a correlation structure where the signal of $P_2$ is generated by garbling the signal of $P_1$. Consequently, after the agent observing the signal from $P_1$, observing signals from $P_2$ provides no additional information. Therefore, the agent loses nothing by ignoring $P_2$, and the best-source strategy that uses only $P_1$ is optimal. 

The interesting cases are when information sources are not Blackwell ranked. In such cases, the Blackwell supremum is strictly more informative than any single information source, so one may hope to do better than a best-source strategy by combining different sources. Surprisingly, \cref{thm:binarybinary} shows that, in decision problems with binary states and binary actions, the agent can never do better than a best-source strategy. Moreover, if the information sources satisfy full support and we exclude cases where multiple information sources are optimal, then any strategy that uses more than one source is strictly suboptimal.

\begin{thm}\label{thm:binarybinary}
    For all $(A,u)$ with $|A| = |\Theta| = 2$, any best-source strategy is robustly optimal. In other words,
    \[
   V(P_1, \ldots , P_m) =  \max_{j = 1, \ldots , m}V(P_j).
    \]
In addition, if each marginal experiment has full support, i.e., $P_j(y_j|\theta)>0$ for all $j,y_j,\theta$, and $\argmax_{j = 1, \ldots , m} V(P_j)$ is a singleton, then all robustly optimal strategies are best-source strategies.
\end{thm}
\begin{proof}
We prove the first part of the theorem.  The proof of the second part requires different arguments and so we defer it to \cref{sec:proof-uniqueness}.

To simplify notation, we write $\overline{P}$ to refer to the Blackwell supremum, $\overline{P}(P_1, \ldots , P_m)$.
By \autoref{value}, it suffices to show that $V(\overline{P})=\max_{j=1,...,m}V(P_j)$.
By \autoref{lem:blackwellsup}, 
\begin{equation}\label{convex_hull}
\Lambda_{\overline{P}} =\co\left(\Lambda_{P_1}\cup \cdots \cup \Lambda_{P_m}\right)    
\end{equation}
Now, the maximum utility achievable given $\overline{P}$ is $V(\overline{P})=\max_{\lambda\in \Lambda_{\overline{P}}}\sum_{a,\theta} u(\theta,a)\lambda(a|\theta).$
Since the maximand is linear in $\lambda$, the fundamental theorem of linear programming states that the maximum is achieved at an extreme point of $\Lambda_{\overline{P}}$. By (\ref{convex_hull}), an extreme point of $\Lambda_{\overline{P}}$ must belong to some $\Lambda_{P_j}$. Hence, we have
$$V(\overline{P})= \max_{\lambda\in \Lambda_{P_j}}\sum_{a,\theta} u(\theta,a)\lambda(a|\theta)=V(P_j)\leq \max_{j' = 1,\ldots , m} V(P_{j'}).$$
Since $\overline{P}$ is more informative than every $P_j$, we also have $V(\overline{P}) \geq \max_{j'= 1, \ldots , m} V(P_{j'})$, concluding the proof.
\end{proof}

\begin{figure}
\centering
	\begin{tikzpicture}[domain=0:3, scale=3.8, ultra thick]

	\fill[orange!20] (0,0)node[below,white]{(0,0)}--(0.1,0.5)--(1,1)--(0.9,0.5)--(0,0);
	\draw[orange] (0.9,0.5)node[right]{$\Lambda_{P_1}$};
	
	\draw[cyan,thin] (0,0)--(0.5,0.9)--(1,1)--(0.5,0.1)--(0,0);
	\fill[pattern=vertical lines, pattern color=cyan] (0,0)--(0.5,0.9)--(1,1)--(0.5,0.1)--(0,0);
	\draw[cyan] (0.5,0.1)node[below,xshift=2]{$\Lambda_{P_2}$};
	
	\draw[gray,dotted,line width=3] (0,0)--(0.1,0.5)--(0.5,0.9)--(1,1)--(0.9,0.5)--(0.5,0.1)--(0,0);

    \draw[gray] (0.7,0.2)node[right]{$\Lambda_{\overline{P}}$};
	
	\draw[red] (0,0.6)--(0.5,0.9)--(1,1.2);

    \draw[red,->] (0.5,0.9)--(0.41,1.05)node[right,xshift=2]{$u$};
	
	\end{tikzpicture}  
	\caption{The maximum is achieved at an extreme point that belongs to $\Lambda_{P_2}$}
	\label{extreme}
\end{figure}

The idea of \autoref{thm:binarybinary} can be visualized in \autoref{extreme} for two marginal experiments. Each marginal Blackwell experiment $P_1,P_2$ can be represented by $\Lambda_{P_1},\Lambda_{P_2}$, the set of feasible state-action distributions generated by the experiment. The corresponding $\Lambda_{\overline{P}}$ for Blackwell supremum $\overline{P}$ is the convex hull of $\Lambda_{P_1}\cup \Lambda_{P_2}$. Since the utility function is linear with respect to $\lambda \in \Lambda_{\overline{P}}$, the maximum is achieved at an extreme point, which belongs to either $\Lambda_{P_1}$ or $\Lambda_{P_2}$, and thus can be achieved by using a single marginal experiment.

\subsection{Separable Problems}

While using a single information source is sufficient in binary-state, binary-action decision problems, more complicated problems often require the agent to use more sophisticated strategies to robustly aggregate information from multiple sources. For example, in the Covid example with two treatments presented in the introduction, we saw that a simple yet robust strategy that uses multiple information sources is to consider the two treatments separately, using the best corresponding information source to determine whether to administer each treatment. 

As a first step toward the analysis of robustly optimal strategies in general binary-state decision problems, we generalize the idea in the example to a class of decision problems, which we call \emph{separable}.

\begin{defn}
  A decision problem $(A,u)$ is a \textbf{separable problem} if $A$ can be written as a product $A_1\times\cdots\times A_k$ where $|A_\ell|=2$ for all $\ell=1,...,k$, and 
\[u(\theta,a)=u_1(\theta,a_1)+\cdots+u_k(\theta,a_k)\] 
for some $\{u_\ell:\Theta\times A_\ell\rightarrow \mathbb{R}\}_{\ell=1}^k$.
\end{defn}

We will use $\bigoplus_{\ell=1}^k(A_\ell,u_\ell)$ to refer to a separable problem and we refer to each of the binary decision problems, $(A_\ell, u_\ell)$, as a \emph{subproblem}.

The next result provides a simple solution to separable problems: for each binary-action subproblem, by \cref{thm:binarybinary}, one can derive a robustly optimal strategy by paying attention to the best marginal experiment and best responding to it. Assembling these strategies then yields a robustly optimal strategy for the original problem.

\begin{lem}\label{prop:separable}
    For any separable problem $\bigoplus_{\ell=1}^k(A_\ell,u_\ell)$,
    \[V\left(P_1,\dots,P_m;\bigoplus_{\ell=1}^k(A_\ell,u_\ell)\right) = \sum_{\ell=1}^k \max_{j=1,...,m} V(P_j;(A_\ell,u_\ell)).\]
Moreover, let $\sigma_\ell:\mathbf{Y}\rightarrow \Delta A_\ell$ be a robustly optimal strategy for subproblem $(A_\ell,u_\ell)$. Then $\sigma:\mathbf{Y}\rightarrow \Delta A_1 \times \cdots\times \Delta A_k$ defined by
\begin{equation}\label{eq:separable-optimal-strateg}
\sigma(y_1,...,y_m)=\bigg(\sigma_\ell(y_1,...,y_m)\bigg)_{\ell=1}^k\quad\text{ for all }y_1,...,y_m
\end{equation}
is a robustly optimal strategy for decision problem $\bigoplus_{\ell=1}^k(A_\ell,u_\ell)$.
\end{lem}

\begin{proof}
    See \cref{sec:separableproof}.
\end{proof}

\begin{rem}
In any separable decision problem, it is immediate that 
\begin{equation}
V\left(P_1,\dots,P_m;\bigoplus_{\ell=1}^k(A_\ell,u_\ell)\right) \geq \sum_{\ell=1}^k \max_{j=1,...,m} V(P_j;(A_\ell,u_\ell)).\label{eq:value-separable}
\end{equation}  The equality in \cref{prop:separable} follows as a result of the special property highlighted in \cref{cor:value-supremum}\,---\,that in binary state environments, there exists a single $\overline{P}(P_1, \ldots , P_m)$ that uniformly minimizes the agent's value across all decision problems.\footnote{In contrast, with three or more states, Nature's worst case joint experiment in \cref{eq:nature-minmax} typically depends on the decision problem. Therefore, $\min_{P\in \mathcal{J}}V(P;\bigoplus_{\ell=1}^k(A_\ell,u_\ell))\geq \sum_{\ell=1}^k \min_{P\in\mathcal{J}} V(P;(A_\ell,u_\ell))$, which in general is not an equality.\label{ft:three-states}}
\end{rem}

\subsection{General Decision Problems and Decompositions}\label{sec:decomposition}

The special structure of separable problems yields simple robustly optimal strategies. To what extent can this structure be applied in tackling more general decision problems? We show in this section that \textit{every} binary-state decision problem is equivalent to a separable problem in a sense to be made precise. The central idea involves decomposing an $n$-action decision problem into $n-1$  binary-action decision problems, and using these subproblems to construct the corresponding separable problem that is equivalent to the original problem. We call the resulting separable problem the \textit{binary decomposition}.

We first define formally what it means for two decision problems to be equivalent. Given a decision problem $(A,u)$, let\footnote{Here and in what follows, whenever $+$ and $-$ are used in the operations of sets, they denote the Minkowski sum and difference.}
\[\po(A,u)=\co\{u(\cdot,a):a\in A\}-\mathbb{R}_+^{2}\] be the associated polyhedron containing all payoff vectors that are either achievable or weakly dominated by some mixed action.
An example of $\po(A,u)$ is depicted in \autoref{figure-actionorder}.

\begin{figure}[htp]
	\centering

	\begin{tikzpicture}[domain=0:3, scale=4.5, ultra thick]

    \fill[orange!10] (-0.1,-0.1)--(1,-0.1)--(1,0.2)--(0.9,0.55)--(0.7,0.8)--(0.3,1)--(-0.1,1)--(-0.1,-0.1);
    
	\draw[<->] (0,1.1) node[left]{$\theta=2$} -- (0,0)-- (1.2,0) node[below,xshift=-5]{$\theta=1$};

    \filldraw[blue] (1,0.2)circle (0.1pt)node[right]{$\bu(\cdot,a_4)$};
    \filldraw[blue] (0.9,0.55)circle (0.1pt)node[right]{$\bu(\cdot,a_3)$};
    \filldraw[blue] (0.7,0.8)circle (0.1pt)node[right]{$\bu(\cdot,a_2)$};
    \filldraw[blue] (0.3,1)circle (0.1pt)node[right,yshift=2]{$\bu(\cdot,a_1)$};

    \draw[orange] (0.45,0.5)node{$\po(A,u)$};
	
	\end{tikzpicture}   
	\caption{The shaded area represents $\po(A,u)$}
   \label{figure-actionorder}
\end{figure}

Whenever $\po(A',u')= \po(A,u)$, it is immediate that 
\[
V(P_1, \ldots ,P_m ; (A', u')) = V(P_1, \ldots , P_m; (A,u))
\]
for all Blackwell experiments $P_1, \ldots , P_m$, and so we call $(A,u)$ and $(A',u')$ \emph{equivalent}.

\begin{defn}
Two decision problems $(A,u)$ and $(A',u')$ are \textbf{equivalent} if 
$\po(A,u)=\po(A',u').$
\end{defn}

Next we show, by direct construction, that every binary-state decision problem is equivalent to a separable problem. We start with some normalization to simplify exposition. First we remove all weakly-dominated actions,\footnote{An action $a\in A$ is weakly-dominated if there exists $\alpha\in \Delta A$ such that $u(\cdot,a)\leq u(\cdot,\alpha)$ and $u(\cdot,a) \neq u(\cdot, \alpha)$. If there are duplicated actions, we remove all but keep one copy.} so that actions can be ordered such that
\begin{align*}
 u(\theta_1,a_1) < u(\theta_1,a_2) < \cdots < u(\theta_1,a_n),\\
 u(\theta_2,a_1) > u(\theta_2,a_2) > \cdots > u(\theta_2,a_n).
\end{align*}
Moreover, by adding a constant vector, we can normalize $u(\cdot,a_1)=(0,0)$. 

\begin{defn}

Given a decision problem $(A, u)$, the \textbf{binary decomposition} of $(A, u)$ is a separable problem $\bigoplus_{\ell=1}^{n-1} (A_\ell,u_\ell)$ where
\[
A_{\ell} := \left\{0, 1 \right\}, u_\ell(\cdot,0) = (0,0), u_\ell(\cdot,1) = u(\cdot,a_{\ell + 1}) - u(\cdot,a_{\ell}).
\]
\end{defn}

\begin{figure}[t]
	\centering
\subfigure[Binary decomposition]{
		\begin{tikzpicture}[domain=0:3, scale=4.8, ultra thick]    
	
    \fill[orange!10] (0.2,0.1)--(1,0.1)--(1,0.2)--(0.9,0.55)--(0.7,0.8)--(0.3,1)--(0.2,1)--(0.2,0.1);
    
	\draw[<->] (0.3,0.1)node[left]{$\theta=2$}--(0.3,1)node[above]{(0,0)}--(1.2,1)node[above]{$\theta=1$};

    \draw[red,<-] (0.7,0.8)--(0.3,1) ;
 	\draw[red] (0.45,0.94)node[right]{$u_1(\cdot,1)$};
 	
 	\draw[red,<-] (0.9,0.55)--(0.7,0.8);
 	\draw[red] (0.75,0.75)node[right]{$u_2(\cdot,1)$};
    
	\draw[red,<-] (1,0.2)--(0.9,0.55);
 	\draw[red] (0.95,0.4)node[right]{$u_3(\cdot,1)$};
    
    \filldraw[blue] (1,0.2)circle (0.1pt)node[left]{$u(\cdot,a_4)$};
    \filldraw[blue] (0.9,0.55)circle (0.1pt)node[left]{$u(\cdot,a_3)$};
    \filldraw[blue] (0.7,0.8)circle (0.1pt)node[left,yshift=-5]{$u(\cdot,a_2)$};
    \filldraw[blue] (0.3,1)circle (0.1pt)node[below left]{$u(\cdot,a_1)$};

    \draw[orange] (0.47,0.4)node{$\po(A,u)$};
	
	\end{tikzpicture}   
}
\hspace{0.1in}
\subfigure[A nonconsecutive sum of $u_\ell(\cdot,1)$ lies in the interior of $\po(A,u)$]{
	\begin{tikzpicture}[domain=0:3, scale=4.8, ultra thick]

    \fill[orange!10] (0.2,0.1)--(1,0.1)--(1,0.2)--(0.9,0.55)--(0.7,0.8)--(0.3,1)--(0.2,1)--(0.2,0.1);

	\draw[<->] (0.3,0.1)node[left]{$\theta=2$}--(0.3,1)node[above]{(0,0)}--(1.2,1)node[above]{$\theta=1$};

    \draw[dotted,orange!50] (1,0.2)--(0.9,0.55)--(0.7,0.8);
   
    \draw[red,<-] (0.7,0.8)--(0.3,1) ;
 	\draw[red] (0.45,0.94)node[right]{$u_1(\cdot,1)$};

	\draw[red,<-] (0.8,0.45)--(0.7,0.8);
 	\draw[red] (0.75,0.65)node[left]{$u_3(\cdot,1)$};
   \filldraw[blue] (0.805,0.44)circle (0.1pt);

    \draw[orange] (0.47,0.4)node{$\po(A,u)$};
	
	\end{tikzpicture}   
}
	
	\caption{}
	\label{figure:canonical}
\end{figure}

The key idea underlying the binary decomposition is to decompose the original problem into binary-action decision problems that compare each pair of consecutive actions. This can be visualized in \autoref{figure:canonical}(a) for an example with four actions. The four-action decision problem is decomposed into three binary-action decision problems, by examining the difference vectors $u(\cdot,a_{\ell + 1}) - u(\cdot,a_{\ell})$. Each decomposed subproblem can be interpreted as choosing whether to ``move forward'' to the next action.

Notice that every feasible payoff vector in the original problem can be replicated in the binary decomposition. This is due to the fact that $u(\cdot, a_i)=\sum_{\ell=1}^{i-1}u_{\ell}(\cdot,1)+\sum_{\ell=i}^{n-1}u_{\ell}(\cdot,0)$ for all $i=1,...,n$. So $\po(A,u)\subseteq \po\big(\bigoplus_{\ell=1}^{n-1}(A_{\ell},u_\ell)\big)$. Of course, the binary decomposition $\bigoplus_{\ell=1}^{n-1}(A_{\ell},u_\ell)$ could introduce additional feasible payoff vectors. For example, in the example in \autoref{figure:canonical}(b), the strategy $(1,0,1)$ in the binary decomposition yields a payoff vector that is infeasible in the original problem. However, this additional payoff vector lies in the interior of $\po(A,u)$, and thus it is dominated by one of the original (possibly mixed) actions. The next lemma shows that this is generally true, so we have $\po(A,u)= \po\big(\bigoplus_{\ell=1}^{n-1}(A_{\ell},u_\ell)\big)$.

\begin{lem}\label{lem:equivalence}
    The binary decomposition of $(A,u)$ is equivalent to $(A,u)$.
\end{lem}
\begin{proof}
    See \cref{proof:equivalence}.
\end{proof}

\cref{prop:separable} and \cref{lem:equivalence} permit us to derive a robustly optimal strategy for any decision problem $(A,u)$ through its binary decomposition. 

\begin{thm}\label{thm:binary-general}
Let $\bigoplus_{\ell=1}^{n-1}(A_{\ell},u_\ell)$ be the binary decomposition of $(A, u)$, and $\sigma_\ell$ be a robustly optimal strategy for $(A_{\ell},u_\ell)$.   Then
\begin{enumerate}
    \item $V(P_1, \ldots , P_m; (A,u)) = \sum_{\ell = 1}^{n-1} \max_{j = 1, \ldots , m} V(P_j; (A_\ell, u_\ell)).$
    \item There exists $\sigma^*:\mathbf{Y}\rightarrow \Delta A$ such that $ u(\cdot,\sigma^*(\mathbf{y}))\geq \sum_{\ell=1}^{n-1} u_\ell(\cdot,\sigma_\ell(\mathbf{y}))$ for all $\mathbf{y}$.
Moreover, any such $\sigma^*$ is a robustly optimal strategy for $(A,u)$.
\end{enumerate}
\end{thm}

\begin{proof}
    See \cref{sec:binarygenproof}.
\end{proof}

\autoref{thm:binary-general} allows us to construct a robustly optimal strategy for any decision problem $(A,u)$ according to a two-step procedure: 
\begin{enumerate}
    \item For each subproblem, $(A_\ell,u_\ell)$, find a best-source strategy $\sigma_\ell$ (which we know is robustly optimal by \cref{thm:binarybinary}). 
    \item For each realization $\mathbf{y}$, pick a (mixed) action $\sigma^*(\mathbf{y})\in\Delta (A)$ such that $u(\sigma^*(\mathbf{y}))\geq \sum_{\ell=1}^{n-1} u_\ell(\sigma_\ell^*(\mathbf{y}))$.
\end{enumerate} 
Notably, once $\sigma_\ell(\cdot )$ has been determined in Step 1, the marginal experiments, $P_1, \ldots , P_m$, play no role in Step 2.  In other words, the marginal experiments only influence the ultimate choice of action in Step 1, and more specifically through its effect on the choice of $\sigma_\ell^*(\mathbf{y})$ in each of the subproblems.

In contrast to \cref{thm:binarybinary}, \cref{thm:binary-general} also highlights the non-uniqueness of robustly optimal strategies when there are more than three actions. This is because there could be multiple $\sigma^*$ that satisfies $ u(\cdot,\sigma^*(\mathbf{y}))\geq \sum_{\ell=1}^{n-1} u_\ell(\cdot,\sigma_\ell(\mathbf{y}))$ for all $\mathbf{y}$. For example, in the Covid example with two treatments in the introduction, the robustly optimal strategy we derived in \cref{table:Covid-4} recommends no treatment when a patient has no symptoms. However, note that giving neither treatment is dominated by giving both treatments, so replacing the No Treatment with $T_1+T_2$ does not decrease the guaranteed value, thus yielding another robustly optimal strategy. The reason that a robustly optimal strategy may play a dominated action is that, when such a strategy is played, the worst-case correlation structure will put probability 0 on the symptom realization $(Cough_{-},Fever_{-})$. 

The theorem delivers two immediate corollaries about the (robust) value of different marginal information sources.

\begin{cor}\label{cor:benefit}
Suppose $\bigoplus_{\ell=1}^{n-1} (A_\ell,u_\ell)$ is the binary decomposition of $(A,u)$.  For any $j$,
\[V(P_1,...,P_m;(A,u))>V(P_1,\ldots ,P_{j-1},P_{j+1},\ldots, P_m;(A,u))\]
if and only if $V(P_j;(A_\ell,u_\ell))> \max_{j'\neq j}V(P_{j'};(A_\ell,u_\ell))$ for some $\ell=1,...,n-1$.
\end{cor}

\autoref{cor:benefit} shows that an additional marginal experiment robustly improves the agent's value if and only if it outperforms all other marginal experiments in at least one of the decomposed problems. In particular, an experiment that performs reasonably well across all decomposed problems can be completely ignored if, for each decomposed problem, there is some other, more specialized experiment that is the best. The next example shows that even when an experiment is the best single source, it can be ignored. 

\begin{ex}\label{ex:third-symptom}
We revisit the Covid example with two treatments in the introduction. Suppose in addition to the Cough and Fever, now we have a third informative symptom, Headache, whose relationship to the diseases is given in \cref{table:headache}.
\begin{table}[htp]
\centering
\begin{tabular}{|c|c|c|}
\hline
 & $+$ & $-$ \\
\hline
Covid & $0.72$ & $0.28$ \\
\hline
Flu & $0.28$ & $0.72$ \\
\hline
\end{tabular}
\caption{Headache symptom}
\label{table:headache}
\end{table}

Note that in either treatment 1 or treatment 2, when using the Headache symptom alone, the agent can achieve a value of $\frac{1}{2}(0.72\times 30+0.28\times (-20))=8$. This means that Headache is the best information source when used in isolation, because it yields a value of $8+8=16$, which is greater than $15.5$, the value of using either the Cough or Fever symptom alone. However, this symptom will not be used in a robustly optimal strategy: it is never the best information source for either treatment 1 or treatment 2, as the value it yields is lower than $8.5$, the value achieved by using the Cough symptom for treatment 1 or using the Fever symptom for treatment 2.

\end{ex}
\begin{cor}\label{cor:number}
For any decision problem $(A,u)$ and any collection of experiments $\{P_j\}_{j=1}^m$, there exists a subset of marginal experiments $\{P_j\}_{j\in J\subseteq \{1,...,m\}}$ with $|J|\leq |A|-1$, such that 
\[V(P_1,\cdots,P_m;(A,u))=V(\{P_j\}_{j\in J};(A,u)).\]
\end{cor}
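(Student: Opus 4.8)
The plan is to leverage the additive value formula from \autoref{thm:binarymain}, which reduces $V(P_1,\ldots,P_m;(A,u))$ to a sum of $n-1$ terms, each being a maximum over the experiments of the value in a binary-action subproblem. The crucial structural fact I would exploit is that the canonical decomposition $(A_1^*,u_1^*),\ldots,(A_{n-1}^*,u_{n-1}^*)$ depends only on the decision problem $(A,u)$ and not on the available experiments. Consequently the \emph{same} collection of binary problems governs the value both for the full collection $\{P_j\}_{j=1}^m$ and for any subcollection, so I may apply the formula twice with identical summands.

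First I would invoke \autoref{thm:binarymain} to write
\[
V(P_1,\ldots,P_m;(A,u))=\sum_{\ell=1}^{n-1}\max_{j=1,\ldots,m}V(P_j;(A_\ell^*,u_\ell^*)).
\]
For each $\ell=1,\ldots,n-1$, let $j_\ell\in\{1,\ldots,m\}$ be an index attaining the maximum in the $\ell$-th term, and set $S=\{j_1,\ldots,j_{n-1}\}$. Since $S$ is the image of $\{1,\ldots,n-1\}$ under the map $\ell\mapsto j_\ell$, we immediately get $|S|\le n-1$.

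Next I would apply \autoref{thm:binarymain} a second time, now to the subcollection $\{P_j\}_{j\in S}$ using the very same canonical decomposition, obtaining
\[
V(\{P_j\}_{j\in S};(A,u))=\sum_{\ell=1}^{n-1}\max_{j\in S}V(P_j;(A_\ell^*,u_\ell^*)).
\]
It then remains only to check that restricting each maximization from $\{1,\ldots,m\}$ to $S$ leaves every summand unchanged: on one hand $S\subseteq\{1,\ldots,m\}$ forces $\max_{j\in S}\le\max_{j}$, while on the other hand $j_\ell\in S$ attains the full maximum, giving $\max_{j\in S}V(P_j;(A_\ell^*,u_\ell^*))\ge V(P_{j_\ell};(A_\ell^*,u_\ell^*))=\max_{j}V(P_j;(A_\ell^*,u_\ell^*))$. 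Equality of the two sums follows termwise.

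There is no substantive obstacle in this argument; the single point requiring care is the observation that the canonical decomposition is independent of the experiments, which is precisely what permits applying the formula of \autoref{thm:binarymain} to the subcollection with the \emph{identical} binary problems $(A_\ell^*,u_\ell^*)$. I would state that independence explicitly at the outset, since the entire conclusion rests on it.
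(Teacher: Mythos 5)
Your proof is correct and is exactly the argument the paper intends: \autoref{cor:number} is stated as an immediate consequence of \autoref{thm:binarymain}, and your double application of its value formula---choosing a maximizer $j_\ell$ for each of the $n-1$ canonical binary subproblems and noting the maxima are unchanged when restricted to $S=\{j_1,\ldots,j_{n-1}\}$---is the intended (omitted) proof. You also correctly identify the load-bearing fact, that the canonical decomposition depends only on $(A,u)$ and not on the experiments, which is what the paper itself emphasizes when contrasting \autoref{thm:binarymain} with the weak decomposition of \autoref{thm:weakde}.
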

\autoref{cor:number} implies that in an $n$-action decision problem, an agent needs to use at most $n-1$ sources of information.  Note that this bound is \emph{independent} of the fine details of the decision problem, such as the exact cardinal utilities of the agent, and the details of the marginal information sources available to the agent.

\section{General-State Decision Problems}\label{section:general}

Our previous analysis focuses on binary-state decision problems. The cornerstone of our approach is the decomposition of a complex decision problem into ``elementary'' binary-action problems. By aggregating the simple solution of these binary-action subproblems, we can derive a solution to the initial, more complex problem. A natural question is whether this approach can be extended into environments with more states. Unfortunately, it fails for multiple reasons: First, with more states, it is unclear how to decompose a general decision problem into more ``elementary'' ones. Second, the non-existence of the Blackwell supremum implies that in Nature's minmax problem \cref{eq:nature-minmax}, there may no longer be a single experiment that uniformly minimize the agent's value across all decision problems, exacerbating the complexity of the analysis  (see \cref{ft:three-states}). Lastly, an agent may want to use multiple information sources even in simple binary-action decision problems, as illustrated in \autoref{ex:threestaets} below.

\begin{ex}\label{ex:threestaets}
 Suppose that there are three states $\theta_1, \theta_2, \theta_3$.  The marginal experiments are both binary with respective signals $x_1, x_2$ and $y_1, y_2$, as given by \autoref{table-ex-threestate}.
\begin{table}[htp]
\centering
\makebox[0pt][c]{\parbox{1\textwidth}{
    \begin{minipage}[b]{0.6\hsize}\centering

\begin{tabular}{ccc}
\multicolumn{3}{c}{$P_X$}                                                                  \\ \hline
\multicolumn{1}{|c|}{$P_X(x|\theta)$}           & \multicolumn{1}{c|}{$x_1$} & \multicolumn{1}{c|}{$x_2$} \\ \hline
\multicolumn{1}{|c|}{$\theta_1$} & \multicolumn{1}{c|}{1}     & \multicolumn{1}{c|}{0}     \\ \hline
\multicolumn{1}{|c|}{$\theta_2$} & \multicolumn{1}{c|}{1}     & \multicolumn{1}{c|}{0}     \\ \hline
\multicolumn{1}{|c|}{$\theta_3$} & \multicolumn{1}{c|}{0}     & \multicolumn{1}{c|}{1}     \\ \hline
\end{tabular}
    \end{minipage}
    \hspace{0in}
    \begin{minipage}[b]{0.1\hsize}\centering
\begin{tabular}{ccc}
\multicolumn{3}{c}{$P_Y$}                                                                  \\ \hline
\multicolumn{1}{|c|}{$P_X(y|\theta)$}           & \multicolumn{1}{c|}{$y_1$} & \multicolumn{1}{c|}{$y_2$} \\ \hline
\multicolumn{1}{|c|}{$\theta_1$} & \multicolumn{1}{c|}{1}     & \multicolumn{1}{c|}{0}     \\ \hline
\multicolumn{1}{|c|}{$\theta_2$} & \multicolumn{1}{c|}{0}     & \multicolumn{1}{c|}{1}     \\ \hline
\multicolumn{1}{|c|}{$\theta_3$} & \multicolumn{1}{c|}{0}     & \multicolumn{1}{c|}{1}     \\ \hline
\end{tabular}
    \end{minipage}
}}
\caption{}
\label{table-ex-threestate}
\end{table}

Intuitively, experiment $P_X$ indicates whether the state is $\theta_3$ or not and experiment $P_Y$ indicates whether the state is $\theta_1$ or not. 
	Note that upon observing both experiments, the agent obtains perfect information, and so in any decision problem, the agent achieves the perfect information payoff.

Let $A = \{0, 1\}$ and suppose that the utilities are as follows:\footnote{Recall that the payoffs here have been weighted by the prior: $u(\theta,a)=\mu_0(\theta)\rho(\theta,a)$.}
	\begin{align*}
	u(\theta,a = 1) &= \mathbf{1}\left( \theta \in \{\theta_1, \theta_3\} \right) - 0.9\cdot \mathbf{1}\left( \theta = \theta_2\right), \\
	u(\theta,a = 0) &= 0.
	\end{align*}
By using only one information source (either $P_X$ or $P_Y$), $a=1$ is the unique optimal action for any signal realization. Therefore, the agent's expected payoff is $1-0.9+1=1.1$. By contrast, when using both information sources, the full information payoff is $1+0+1=2$.

\end{ex}

Due to the difficulties highlighted above, an explicit construction of robustly optimal strategies remains an open question.  However, our main point that robustly optimal strategies ignore many information sources remains valid even in general decision problems with larger state spaces.
To this end, our main result of this section provides an upper bound on the number of information sources that are used by an agent under a robustly optimal strategy.  Similar to Corollary~\ref{cor:number}, this upper bound depends only on the decision problem at hand, and does not depend on the fine details of the marginal information sources being observed by the agent.

To state our upper bound, we first examine the structure of the \emph{interim value function} associated with the decision problem. Recall that a decision problem is a tuple $\Gamma\equiv(\Theta,\mu_0,A,\rho)$ with a finite state space $\Theta$, a prior $\mu_0\in\Delta\Theta$, a finite action space $A$, and a utility function $\rho:\Theta\times A\rightarrow \mathbb{R}$. For a given decision problem, the corresponding interim value function, $v:\Delta (\Theta)\rightarrow \mathbb{R}$, is defined as
\[v(\mu)\coloneqq \max_{a\in A}\sum_{\theta\in\Theta} \mu(\theta)\rho(\theta,a).\] 

Given a value function $v:\Delta(\Theta)\rightarrow \mathbb{R}$, its epigraph is defined as $\epi(v)=\{(\mu,w):w\geq v(\mu),\mu\in \Delta(\Theta)\}$. It can be easily seen that the set of extreme points of the epigraph, denoted by $\ext(\epi(v))$, is finite and contains $\{(\delta_1,v(\delta_1)),...,(\delta_n,v(\delta_n))\}$, where $\delta_i$ denotes the Dirac measure on $\theta_i$. We call the elements of $\ext(\epi(v))$, excluding those degenerate points $(\delta_i,v(\delta_i))$, the \emph{kinks} of $v$.\footnote{Similar approaches have appeared in \citet*{bergemann2015limits} and \citet{lipnowski2017simplifying}, where these objects are called ``extremal markets'' or  ``outer points.''} Thus, \emph{the number of kinks of $v$} is $|\ext(\epi(v))|-|\Theta|$. See \autoref{figure:interim-value} for an illustration when $|\Theta|=2$ and $|A|=3$. Each dashed line denotes the agent's interim payoff from an action, and their upper envelope (in red) is the interim value function. The shaded area represents the epigraph and the blue dots are the kinks.

\begin{figure}[htp]
\centering
	\begin{tikzpicture}[domain=0:3, scale=5, thick]

            \filldraw[orange!30] (0,0.95)--(0,0.81)--(1/3,0.61)--(2/3,0.61)--(1,0.81)--(1,0.95);
            
			\draw[<->] (0,1)node[left]{$v(\mu)$}--(0,0)node[below]{$0$}--(1.1,0)node[below]{$\mu$};
			\draw (1,0.03)--(1,0)node[below]{1};
				
			\draw[dashed] (0,0.8)--(1,0.2);%node[right]{\small $ \mu u(\theta_1,a_1)+(1-\mu) u(\theta_2,a_1)$}
		    \draw[dashed] (0,0.6)--(1,0.6);%node[right]{\small $ \mu u(\theta_1,a_2)+(1-\mu) u(\theta_2,a_2)$}
			\draw[dashed] (0,0.2)--(1,0.8);%node[right]{\small $ \mu u(\theta_1,a_3)+(1-\mu) u(\theta_2,a_3)$} 
				
			\draw[red!60,ultra thick] (0,0.81)--(1/3,0.61)--(2/3,0.61)--(1,0.81); %node[above,yshift=15,xshift=10]{$v(\mu)$}

			\filldraw[blue] (1/3,0.61)circle (0.4pt) (2/3,0.61)circle (0.4pt);  
		\end{tikzpicture}   
    \caption{Interim value function and kinks}
    \label{figure:interim-value}
\end{figure}

The following theorem provides a bound on the number of experiments that a agent would need, which is the number of kinks of the corresponding interim value function.

\begin{thm}\label{thm:generalstate}
	Consider any decision problem whose corresponding interim value function has $k$ kinks. For any collection of experiments $\{P_j\}_{j=1}^m$, there exists a subset of marginal experiments $\{P_j\}_{j\in J\subseteq \{1,...,m\}}$ with $|J|\leq k$, such that 
\[V(P_1,...,P_m)=V(\{P_j\}_{j\in J}).\]
\end{thm}
\begin{proof}
    See \cref{proof:generalstate}.
\end{proof}

\begin{rem}
\begin{enumerate}
    \item  \cref{thm:generalstate}, along with \eqref{eq:nature-minmax}, implies that there always exists a correlation among the information sources such that the marginal value of information sources outside $J$ are all $0$. This suggests that ignorance of information can also be rationalized by a Bayesian agent with a subjective correlation structure that is unobserved by an analyst.
    \item It is easy to see that multiplying the utilities by a constant does not change the set of robustly optimal strategies.  Consequently, \cref{thm:generalstate} shows that a robustly optimal strategy attends to at most $k$ information sources even when the stakes of the problem are arbitrarily large.
\end{enumerate}
    
\end{rem}

The full proof of \autoref{thm:generalstate} is deferred to the Appendix, but here we sketch the main steps. By the minmax theorem, it suffices to examine Nature's minmax problem:
\[
V(P_1, \ldots P_m) = \min_{P \in \mathcal{J}(P_1, \ldots, P_m)} V(P).
\]
By \cref{lem:relaxed-problem}, Nature's minmax problem can be relaxed into choosing an experiment among the set of all experiments that Blackwell dominate $P_1,...,P_m$. According to Blackwell's theorem (Theorem 1 in \cite{blackwell1953equivalent}), this is equivalent to choosing a posterior distribution that is a mean-preserving spread of the posterior distributions induced by $P_1,...,P_m$.

Next, note that the interim value function is convex and piecewise linear. Moreover, the ``kinks'' are the extreme points of those linear faces. Any non-extreme point in those linear faces can be expressed as a convex combination of extreme points. Thus, we can apply a mean-preserving spread to take any belief into extreme points while leaving the expected payoff unchanged. This allows us to further simplify Nature's minmax value, by restricting attention to those experiments whose induced posterior distributions are supported on the extreme points. This set can be characterized by a $k$-dimensional polytope, where $k$ is the number of kinks. 

Now Nature's problem can be written as a $k$-dimensional linear program with $k$ effective constraints. These $k$ effective constraints must come from at most $k$ marginal experiments. Consequently, the value of the problem is the same as the value of the problem with $k$ experiments. Hence, the agent need not use more than $k$  experiments.

\bigskip

The bound in \cref{thm:generalstate} is based on the number of kinks of the interim value function, which may be hard to calculate when there are more than three states. The next corollary provides a bound that depends only on $|\Theta|$ and $|A|$. The key idea is that by the Upper Bound Theorem for polytopes, the maximum number of kinks can be derived as a function of the number of facets of the epigraph, which in turn is connected to the number of actions.

\begin{cor}\label{cor:generalstate}

Consider any decision problem. For any collection of experiments $\{P_j\}_{j=1}^m$, there exists a subset of marginal experiments $\{P_j\}_{j\in J\subseteq \{1,...,m\}}$ with 
\[|J|\leq \begin{pmatrix}
        |\Theta|+|A|+1- \left\lfloor \frac{|\Theta|+1}{2} \right\rfloor\\
        |A|+1
    \end{pmatrix}
   +\begin{pmatrix}
        |\Theta|+|A|+1- \left\lfloor \frac{|\Theta|+2}{2} \right\rfloor\\
        |A|+1
    \end{pmatrix}-2|\Theta|,\] 
    such that 
\[V(P_1,...,P_m)=V(\{P_j\}_{j\in J}).\]
 \end{cor}
 \begin{proof}
   See \cref{proof:cor:generalstate}.
 \end{proof}

\begin{rem}\begin{enumerate}
    \item When $|\Theta|=2$, this bound reduces to $|A|-1$, the bound given in \cref{cor:number};
    \vspace{-0.1in}
    \item The bounds in both \cref{thm:generalstate} and \cref{cor:generalstate} depend only on the decision problem. Therefore, as $m$ grows large, the above theorem tells us that there exists a sequence of robustly optimal strategies for which the fraction of information sources that are ignored converges to $1$.
\end{enumerate}
\end{rem}

\cref{thm:generalstate} suggests one may ignore information sources due to robustness concerns. The following proposition further provides a sufficient condition for information to be ignored: if an information source $P_{m}$ is not the best information source among $\{P_j\}_{j=1}^m$ for any decision problem, then it can be safely ignored in a robustly optimal strategy. 

\begin{prop}\label{prop: convex-dominance}
    Suppose that $V(P_m;(A,u))\leq \max_{j=1,...,m-1} V(P_{j};(A,u))$ for all decision problems $(A,u)$. Then, for any decision problem $(A',u')$,
     \[V(P_1,...,P_m;(A',u'))=V(P_1,...,P_{m-1};(A',u')).\]
\end{prop}
\begin{proof}
    See \cref{proof: convex-dominance}.
\end{proof}

The condition in  \cref{prop: convex-dominance} is weaker than $P_{m}$ being Blackwell dominated by one of the other experiments $P_1,...,P_{m-1}$, because the experiment that outperforms $P_m$ may depend on the particular decision problem $(A,u)$. As shown in \cite{cheng2023dominance}, this condition is equivalent to $P_m$ being dominated by a convex combination of $P_1,...,P_{m-1}$. This characterization will be useful in our proof.\footnote{In the proof, we established a slightly stronger result than \cref{prop: convex-dominance}: experiment $P_m$ can be ignored if it is dominated by all correlation structures among $P_1,...,P_{m-1}$.}

Similar to \cref{cor:benefit}, this proposition highlights a sense in which it is beneficial to gather information from multiple information sources that are specialized: the agent prefers to pay attention only to those information sources that perform  the best in isolation in some decision problem.  %In other words, there may be information sources that perform reasonably well across all decision problems, but which the agent chooses to ignore because, for each decision problem, there is at least one other experiment that performs better.

\section{Discussion}\label{section:discussion}
This section discusses some extensions of our model. \cref{sec:extension:common_source} discusses the implications of additional knowledge about the correlation structure.  
\cref{sec:extension:marginal} shows that \cref{thm:binarybinary} extends to scenarios where the agent has even less knowledge about the information sources\,---\,introducing an additional layer of ambiguity regarding the marginal experiments. \cref{sec:expert-opinion} considers the case where the information sources available to the agent have already been processed by experts.

\subsection{Knowledge of Correlation}\label{sec:extension:common_source}

\subsubsection{Common Source}

A natural underlying reason for multiple sources of information being correlated is that they are based on a common information source. For instance, financial consultants may base their recommendations on the same dataset, leading to correlations among their recommendations. If we know that a common information source is the \textit{only} possible channel generating the correlation among information sources, does this additional knowledge help the agent to restrict the presumed set of correlations? In other words, what types of correlation structures can be rationalized by sharing a common source?

Formally, we say a joint experiment $P\in\mathcal{J}(P_1,...,P_m)$ is \textit{rationalizable by a common source} if there exists $Q:\Theta\rightarrow \Delta X$ and a collection,  $\left\{ \gamma_j : X \rightarrow \Delta(Y_j) \right\}_{j}$, such that 
\[P(y_1,...,y_m|\theta)=\sum_x \prod_{j=1}^m \gamma_j(y_j|x)Q(x|\theta).\]
The interpretation is that $Q$ is the common but unknown fundamental information source, and the experiments $P_1,...,P_j$ are generated by independent garblings of signals from $Q$.

An immediate observation is that every $P\in \mathcal{J}(P_1,...,P_m)$ is rationalizable by a common source. This can be seen by letting the common source $Q$ be $P$ itself, and the garblings $\gamma_j$ be the deterministic functions that project each vector $y_1,...,y_m$ onto $y_j$. Therefore, this additional knowledge does not exclude any possible correlation.

\subsubsection{Partial Knowledge of Correlations}\label{sec:partial-knowledge}

In certain situations, an agent may understand the correlation among some information sources, even if they do not comprehend all of them. For example, in medical diagnoses, older technologies such as X-rays and MRI have well-understood correlations. On the other hand, newer technologies, such as genetic sequencing, may have correlations with these traditional tests that are not yet fully understood.

In the context of our model, such knowledge can be modeled as imposing additional constraints on the set of conceived joint experiments $\mathcal{J}(P_1,...,P_m)$. A simple case in which our results extend in a straightforward manner is the following:  Suppose that there is a partition, $\Pi = \{S_1, \ldots , S_k\}$, of $\{1,2, \ldots , m\}$ such that for all $S \in \Pi$, the agent knows that joint distribution over signals in $S$ is given by:
\[
\sum_{
y_{-S}} P(y_S,y_{-S} | \theta) = P_S(y_S | \theta).
\]
Then the set of conceived information structures is given by:
\[
\left\{ P \in \mathcal{J}(P_1, \ldots , P_m) : \sum_{y_{-S}} P(y_S,y_{-S} | \theta) = P_S(y_S | \theta), \forall \theta,S\in \Pi, y_S\in Y_S  \right\}.
\]
But note that we could treat each joint experiment, $P_{S_1}, \ldots , P_{S_k}$, as separate marginal experiments, and then our analysis extends in a straightforward manner to this environment.

However, our analysis does not immediately extend to other more complex situations. In particular, when the knowledge on the correlations span across non-disjoint subsets, the set of possible joint experiments can no longer be treated by replacing a subset of experiments with a single experiment, and our existing results no longer apply. To illustrate, suppose there are three information sources, $\{P_1,P_2,P_3\}$, and that the agent knows that
$P_1$ and $P_2$ are correlated according to $P_{12}:\Theta\rightarrow \Delta (Y_1\times Y_2)$, and that $P_2$ and $P_3$ are correlated according to $P_{23}:\Theta\rightarrow \Delta(Y_2\times Y_3)$. The set of feasible joint experiments would be
\[
\left\{P:\Theta\rightarrow \Delta(Y_1\times Y_2\times Y_3)  \;\middle\vert\;
   \begin{array}{@{}l@{}}
 \sum\limits_{y_3}P(y_1,y_2,y_3|\theta)=P_{12}(y_1,y_2|\theta),\forall \theta,y_1,y_2 \\
  \sum\limits_{y_1}P(y_1,y_2,y_3|\theta)=P_{23}(y_2,y_3|\theta),\forall \theta,y_2,y_3  
   \end{array}
\right\}. 
\]
An interesting direction for future research would be to consider general restrictions on the set of correlation structures derived from a causality diagram (see \cite{pearl2009causality} and \cite{spiegler2016bayesian}).

\subsection{Ambiguity about Marginals}\label{sec:extension:marginal}

Our model so far assumes that the agent understands each information source precisely; that is, she knows $P_j$ for $j=1,...,m$. In this section, we extend our model to allow for additional ambiguity about the marginal information sources. 

Let $\mathcal{P}_j$ denote the set of possible marginal experiments for information source $j=1,...,m$. Let all $P_j\in \mathcal{P}_j$ have the same finite signal space $Y_j$. In addition, each $\mathcal{P}_j$ is assumed to be convex. That is, if $P_j:\Theta\rightarrow \Delta(Y_j)$ and $P'_j:\Theta\rightarrow \Delta(Y_j)$ are both in $\mathcal{P}_j$, then for any $\lambda\in(0,1)$, $ Q_\lambda:\Theta \rightarrow \Delta (Y_j)$ defined as $\theta \mapsto \lambda P_j(\cdot|\theta)+ (1-\lambda) P'_j(\cdot|\theta)$ is also in $\mathcal{P}_j$.

The agent conceives of the following set of joint experiments:
	\[
	\mathcal{J}(\mathcal{P}_1,...,\mathcal{P}_m)=\left\{
	P:\Theta\rightarrow \Delta(\mathbf{Y}): \exists P_j\in \mathcal{P}_j, 
	\sum_{-j}P(y_1,\ldots,y_m|\theta)=P_j(y_j|\theta)\text{ for all }\theta,j,y_j
	\right\}.
	\]
The agent's decision problem is similarly defined:
\[V(\mathcal{P}_1, \ldots , \mathcal{P}_m):= \max_{\sigma:\mathbf{Y}\rightarrow \Delta (A)} \min_{P\in \mathcal{J}(\mathcal{P}_1,...,\mathcal{P}_m)} \sum_{\theta\in\Theta} \sum_{\mathbf{y}\in \mathbf{Y}} P(\mathbf{y}|\theta)u(\theta,\sigma(\mathbf{y})).\]

We show that the prediction in \cref{thm:binarybinary} is robust to this additional layer of ambiguity.

\begin{prop}\label{prop:ex:marginals}
        For all $(A,u)$ with $|A| = |\Theta| = 2$,
    \[
    V(\mathcal{P}_1, \ldots , \mathcal{P}_m) =  \max_{j = 1, \ldots , m}V(\mathcal{P}_j ).
    \]
\end{prop}
\begin{proof}
    See \cref{proof:ex:marginals}.
\end{proof}

\subsection{Aggregating Experts' opinions}\label{sec:expert-opinion}

In certain instances, an agent may not have the expertise to process raw information sources. Instead, she may rely on experts who understand the information sources to offer their opinions, such as in the form of beliefs (e.g., doctors offering beliefs on the likelihood of a successful surgery) or action recommendations (e.g., financial consultants providing investment recommendations).

Reporting beliefs and offering action recommendations can both be viewed as garblings of the original, raw information sources. For any given information source $P_j:\Theta\rightarrow \Delta(Y_j)$, we call the induced \textit{belief information structure}, denoted by $B_{P_j}:\Theta\rightarrow \Delta (\Theta)$, as the information structure derived by garbling each signal into the corresponding induced beliefs. In addition, we call the induced \textit{recommendation information structure}, denoted by $R_{P_j}:\Theta\rightarrow \Delta A$, as the information structure derived by a garbling $\sigma_j^*$, given by an optimal strategy:
\[\sigma_j^*\in\argmax_{\sigma_j:  Y_j \rightarrow  A} \sum_{\theta,y_j}P_j(y_j|\theta) u(\theta,\sigma_j(y_j)).\]
Note that, in contrast to the belief information structure, the recommendation information structure depends on the decision problem.

When the agent has access to only a single source of information, compressing the information through reporting beliefs or action recommendations does not hurt the agent, that is, $V(P_j)=V(B_{P_j})=V(R_{P_j})$ for any $j$. This is because beliefs and action recommendations already contain all the information needed to make an optimal decision. 

When multiple information sources are available, garbling information by reporting only beliefs or recommendations could potentially hurt payoffs because some of the lost information, which is not useful on its own, could become valuable when combined with other sources. This begs the question of whether the agent could still achieve the same value as if she had access to the raw information sources. In other words, does 
\[V(P_1,...,P_m)=V(B_{P_1},...,B_{P_m})=V(R_{P_1},...,R_{P_m})
\]
hold when $m>1$?

First, it is indeed the case that $V(P_1,...,P_m)=V(B_{P_1},...,B_{P_m})$: since $P_j$ is Blackwell equivalent to $B_{P_j}$ for all $j$, \cref{lem:relaxed-problem} implies the values $V(P_1,...,P_m)$ and $V(B_{P_1},...,B_{P_m})$ must be equal. The relationship between $V(R_{P_1},...,R_{P_m})$ and $V(P_1,...,P_m)$ is more interesting: when $|\Theta|=2$, these values coincide, but in general, we could have $V(R_{P_1},...,R_{P_m})<V(P_1,...,P_m)$.

\begin{prop}\label{prop: binary-recommendation}
    When $|\Theta|=2$, for any $(A,u)$,
    \[V(P_1,...,P_m)=V(R_{P_1},...,R_{P_m}).\]
\end{prop}
\begin{proof}
See \cref{proof: binary-recommendation}.
\end{proof}

When there are three or more states, the recommendation information structure could generate a strictly lower value than the raw information structure. This can be seen by revisiting \cref{ex:threestaets}. Recall that in the example, under both $P_X$ and $P_Y$, $a=1$ is the unique optimal action to any signal realization. Therefore, both $R_{P_X}$ and $R_{P_Y}$ are uninformative experiments, and so $V(R_{P_X},R_{P_Y})=1-0.9+1=1.1$. By contrast, the agent obtains perfect information when observing the raw information structures, and thus $V(P_X,P_Y)=1+0+1=2>V(R_{P_X},R_{P_Y})$.

\section{Conclusion}\label{section:conclusion}

We have shown how concerns about correlations could lead the agent to disregard seemingly relevant information. The most striking result emerges when there are only two states and two actions: the agent must rely on a single information source, ignoring all others. Moreover, binary-state binary-action decision problems serve as building blocks for any general binary-state decision problem, enabling us to explicitly characterize the robustly optimal strategies. Such strong results come with the strong binary-state assumption, which certainly restricts their applicability. Nevertheless, there are interesting settings where the uncertainty can be naturally modeled with two states, such as in simple hypothesis testing, disease diagnoses, and presidential elections.

With more than two states, we do not have a closed-form characterization of the optimal strategy, but we establish a bound on the number of information sources that are used. Crucially, this bound depends only on the decision problem and does not depend on the number of information sources or their specific details. Thus, the bound specifies a limit to the addition of relevant information sources\,---\,at some point, any new relevant information source introduced will necessarily make at least one existing source redundant.

Aside from providing normative guidelines for constructing robust strategies, our findings offer an alternative explanation for the ignorance of information, which has distinct implications compared to existing explanations. For example, in models of rational inattention (see \citet*{mackowiak2023rational} for a survey), more information is acquired and used when stakes are raised. In contrast, in our model, multiplying the utility function by any constant does not alter the set of information sources attended to. This distinction may help explain the ignorance of information even in high-stakes decision problems.\footnote{For example, \citet{olver2020second} found in a study that only 16.1\% percent of patients sought a second opinion about their cancer treatment.}

Throughout the paper, we have interpreted our model as applying to situations where the agent lacks information about the correlation among sources. An alternative interpretation is that this information is available, but the agent lacks the mental capacity to utilize this information. Indeed, there is empirical evidence that people often make mistakes when trying to use information about correlations---a phenomenon known as \emph{correlation neglect} (see, e.g., \citet{enke2019correlation}). A sophisticated agent who is aware of their potential bias might try to protect themselves by robustly optimizing.

Our analysis leaves open some interesting questions. As discussed in \cref{sec:extension:common_source} and \cref{sec:extension:marginal}, other sets of joint experiments can be conceived. Investigating such alternatives could give insight into what kinds of information are more or less valuable depending on the kind of ambiguity present.  In addition, our paper does not make any assumptions about the marginal experiments. In certain applications, parametric assumptions might be natural, such as assuming Gaussian distributions. This extra structure could allow reaching stronger conclusions. Lastly, in networks, agents' actions are influenced by observing their neighbors, creating a complex interdependence shaped by the network structure. It seems natural to model agents as robust optimizers in such environments where the correlation among sources is important, but difficult to determine.

\bibliography{Merging}
\bibliographystyle{plainnat}

\newpage
\appendix
\addtocontents{toc}{\protect\setcounter{tocdepth}{1}}

\section{Appendix}

\subsection{Proof of \autoref{lem:relaxed-problem}}\label{proof:blackwellsup}
\begin{proof}

It suffices to show that for any $Q\in \mathcal{D}(P_1,...,P_m)$, there exists $P\in \mathcal{J}(P_1,...,P_m)$ such that $P$ is Blackwell dominated by $Q$.

Take any $Q\in \mathcal{D}(P_1,...,P_m)$ and let $X$ be the signal space of $Q$. By \hyperref[thm:Blackwell]{Blackwell's Theorem}, there exist  $\gamma_j:X\rightarrow \Delta Y_j$ such that for each $j$,
\[P_j(y_j|\theta)=\sum_x\gamma_j(y_j|x)Q(x|\theta).\]
Define the following joint Blackwell experiment $P:\Theta\rightarrow \Delta(Y_1\times ...\times Y_m)$:
\begin{equation*}\label{garbling}
P(y_1,...,y_m|\theta)=\sum_x \prod_{j=1}^m \gamma_j(y_j|x)Q(x|\theta).
\end{equation*}
Clearly, $P\in \mathcal{J}(P_1,...,P_m)$ because $\sum_{y_{-j}}P(y_1,...,y_m|\theta)=\sum_x\gamma_j(y_j|x)Q(x|\theta)=P_j(y_j|\theta)$. Moreover, $\prod_{j=1}^m \gamma_j(y_j|x)$ defines a garbling, so $P$ is Blackwell Dominated by $Q$.
\end{proof}

\subsection{Proof of \autoref{prop:separable}}\label{sec:separableproof}

\begin{proof}
To reduce notation, let's write $\mathbb{E}_{P}\left[u_\ell(\theta,\sigma_\ell)\right]=\sum_{\theta,\mathbf{y}} u_\ell(\theta,\sigma_\ell(\mathbf{y}))P(\mathbf{y}|\theta).$
Since $\sigma=\left(\sigma_\ell\right)_{\ell=1}^k$ is a feasible strategy,
\begin{align*}
V\left(P_1, \ldots , P_m ; \bigoplus_{\ell = 1}^{k} (A_\ell, u_\ell)\right) &\geq \min_{P \in \mathcal{J}(P_1, \ldots , P_m)} \sum_{\ell=1}^k  \mathbb{E}_{P}\left[u_\ell(\theta,\sigma_\ell)\right] \\
&\geq  \sum_{\ell=1}^k  \min_{P \in \mathcal{J}(P_1, \ldots , P_m)}\mathbb{E}_{P}[ u_\ell(\theta,\sigma_\ell)]\\
&= \sum_{\ell=1}^k \max_{j=1,...,m} V(P_j;(A_\ell,u_\ell)).
\end{align*}
Moreover, by \cref{thm:binarybinary} and \cref{value},
\begin{align*}
\sum_{\ell=1}^k \max_{j=1,...,m} V(P_j;(A_\ell,u_\ell)) &= \sum_{\ell=1}^k V(\overline{P}(P_1,...,P_m);(A_\ell, u_\ell)) \\
&= V\left(\overline{P}(P_1, \ldots , P_m); \bigoplus_{\ell = 1}^{k} (A_\ell, u_\ell)\right) \\
&\geq V\left(P_1, \ldots , P_m; \bigoplus_{\ell = 1}^{k} (A_\ell, u_\ell)\right).
\end{align*}
Together, these inequalities prove our claim that 
\begin{align*}
V\left(P_1, \ldots , P_m ; \bigoplus_{\ell = 1}^{k} (A_\ell, u_\ell)\right) 
%&= \inf_{P \in \mathcal{J}(P_1, \ldots , P_m)} \mathbb{E}_{P}\left[\sum_{\ell=1}^k u_\ell(\theta,\sigma_\ell(\mathbf{y}))\right] 
= \sum_{\ell=1}^k \max_{j=1,...,m} V(P_j;(A_\ell,u_\ell))
\end{align*}
and that $\sigma$ is a robustly optimal strategy.

\end{proof}

 \subsection{Proof of \autoref{lem:equivalence}}\label{proof:equivalence}

\begin{proof}

Consider the binary decomposition $\bigoplus_{\ell=1}^{n-1}(A_{\ell},u_\ell)$. We prove that for any ${\boldsymbol\delta}\in\{0,1\}^{n-1}$, $\sum_{\ell=1}^{n-1} {\boldsymbol\delta}_\ell u_\ell(\cdot,1)\in \po(A,u)$. 

Suppose, by way of contradiction, that there exists ${\boldsymbol\delta}\in\{0,1\}^{n-1}$ for which 
$u^*:= \sum_{\ell=1}^{n-1} {\boldsymbol\delta}_\ell u_\ell(\cdot,1) \notin \mathcal{H}(A, u).$ Since $\mathcal{H}(A, u)$ is convex and closed, we can strictly separate it from the singleton $u^*$(Corollary 11.4.2 of \citet{rockafellar1970convex}), i.e. there exists $\lambda \in \mathbb{R}^2 \setminus \{(0,0)\}$ such that
\begin{align}\label{eqn:sep}
\lambda \cdot u^* > \sup_{v \in \mathcal{H}(A, u)} \lambda \cdot v.
\end{align}
Note that $\lambda \geq 0$ since otherwise $\sup_{v \in \mathcal{H}(A, u)} \lambda \cdot v = + \infty$.

From the ordering of the actions and the binary decomposition, $u_\ell(\theta_2,1)\slash u_\ell(\theta_1,1)$ is decreasing in $\ell$. Therefore, for any $\ell' > \ell$,
\[
\lambda \cdot u_\ell( \cdot,1)  \leq  0 \Longrightarrow \lambda \cdot u_{\ell'}(\cdot,1) \leq  0. 
\]
So there exists $\ell^*$ such that $\lambda \cdot u_\ell( \cdot,1)>0$ for $\ell<\ell^*$ and $\lambda \cdot u_\ell( \cdot,1)\leq 0$ for $\ell\geq \ell^*$.

\begin{figure}[htp]
	\centering

	\begin{tikzpicture}[domain=0:3, scale=4.8, ultra thick]

    \fill[orange!10] (0.2,0.1)--(1,0.1)--(1,0.2)--(0.9,0.55)--(0.7,0.8)--(0.3,1)--(0.2,1)--(0.2,0.1);

    \filldraw[blue] (1,0.2)circle (0.1pt)node[right]{$\bu(a_4)$};
    \filldraw[blue] (0.9,0.55)circle (0.1pt)node[left]{$\bu(a_3)$};
    \filldraw[blue] (0.7,0.8)circle (0.1pt)node[left]{$\bu(a_2)$};
    \filldraw[blue] (0.3,1)circle (0.1pt)node[right,yshift=2]{$\bu(a_1)$};
    
    \filldraw[red] (1.05,0.6)circle (0.1pt)node[below,xshift=8,yshift=4]{$u^*$};
    
    \draw[red] (1.1,0.3)--(0.75,1) ;
    \draw[red,->] (0.925,0.65)--(1.125,0.75)node[above,xshift=-6]{$\lambda$};
    
	\draw[dotted] (1,0.2)--(0.9,0.55);
	\draw[dotted] (0.9,0.55)--(0.7,0.8);
	\draw[dotted] (0.7,0.8)--(0.3,1);

    \draw[orange] (0.45,0.5)node{$\po(A,u)$};
	
	\end{tikzpicture}   
	\caption{}
   \label{}
\end{figure}
Thus
\[\max_{{\boldsymbol\delta}'\in \{0,1\}^{n-1}}\sum_{\ell=1}^{n-1} \lambda \cdot {\boldsymbol\delta}'_\ell u_\ell(\cdot,1)\]
is solved by choosing ${\boldsymbol\delta}'_\ell=1$ for $\ell<\ell^*$ and ${\boldsymbol\delta}'_\ell=0$ for $\ell\geq\ell^*$. Hence 
\[
\lambda\cdot u(\cdot,a_{\ell^*})=\lambda \cdot \sum_{\ell=1}^{\ell^*-1}u_\ell(\cdot,1)\geq \lambda \cdot \sum_{\ell=1}^{n-1} {\boldsymbol\delta}_\ell u_\ell(\cdot,1)=\lambda \cdot u^*.
\]
But $u(\cdot,a_{\ell^*}) \in \mathcal{H}(A, u)$, contradicting \eqref{eqn:sep}.

\end{proof}

\subsection{Proof of \autoref{thm:binary-general}}\label{sec:binarygenproof}

\begin{proof}
    From \autoref{lem:equivalence}, $(A,u)$ is equivalent to $\bigoplus_{\ell=1}^{n-1}(A_\ell,u_\ell)$, so
    \[V(P_1,...,P_m;(A,u))=V\left(P_1,...,P_m;\bigoplus_{\ell=1}^{n-1}(A_\ell,u_\ell)\right)=\sum_{\ell = 1}^{n-1} \max_{j = 1, \ldots , m} V(P_j; (A_\ell, u_\ell)),\]
    where the second equality follows from   \autoref{prop:separable}. This establishes the first statement of the theorem.

    For each $\mathbf{y}$, $\sum_{\ell=1}^{n-1} u_\ell(\cdot,\sigma_\ell(\mathbf{y}))\in \po\Big(\bigoplus_{\ell=1}^{n-1}(A_\ell,u_\ell)\Big)=\po(A,u)$. So there exists $\sigma^*(\mathbf{y})$ such that $u(\cdot,\sigma^*(\mathbf{y}))\geq \sum_{\ell=1}^{n-1} u_\ell(\cdot,\sigma_\ell(\mathbf{y}))$. %Let $\sigma:\mathbf{Y}\rightarrow A_1\times ... \times A_{n-1}$ be a strategy  defined according to \eqref{eq:separable-optimal-strateg}. 
    Now for any $P\in\mathcal{J}(P_1,...,P_m)$,
    \begin{align*}
         \mathbb{E}_{P}\left[ u(\theta,\sigma^*(\mathbf{y}))\right]&\geq \mathbb{E}_{P}\left[\sum_{\ell=1}^{n-1} u_\ell (\theta,\sigma_\ell(\mathbf{y}))\right]\\
        &=V\left(P_1,...,P_m;\bigoplus_{\ell=1}^{n-1}(A_\ell,u_\ell)\right)\\
        &=V\left(P_1,...,P_m;(A,u)\right)
    \end{align*}
    where the second line follows from \autoref{prop:separable} and the third line follows from \autoref{lem:equivalence}.
  So $\sigma^*$ is a robustly optimal strategy.
\end{proof}

\subsection{Proof of \autoref{thm:generalstate}}\label{proof:generalstate}
We shall start with some preliminary definitions and lemmas.

\subsubsection{Definitions}\label{proof:defn}
    For a Blackwell experiment $P:\Theta\rightarrow\Delta Y$, the induced posterior distribution $\tau^P\in \Delta(\Delta(\Theta))$ is defined as
    \[\tau^P(\mu)=\sum_{y\in Y_\mu} \sum_\theta \mu_0(\theta)P(y|\theta)\]
    where 
    \[Y_\mu=\left\{y\in Y\Big| \frac{\mu_0(\theta)P(y|\theta)}{\sum_\theta \mu_0(\theta)P(y|\theta)}=\mu(\theta),\forall \theta\right\}.\]
Since we assume that $Y$ is finite, $\tau^P$ will always have a finite support. For convenience, when we sum over $\mu$ a term that multiplies $\tau^P(\mu)$, it is to be understood that the sum is over the support of $\tau^P$.

For each $a\in A$, let
\[
M_a =\left\{\mu\in\Delta (\Theta)\middle| \sum_\theta \mu(\theta)\rho(\theta,a)\geq \sum_\theta \mu(\theta)\rho(\theta,a')\text{ for all } a'\in A\right\}
\]
denote the set of beliefs that action $a$ best responds to. It is easy to check that $M_a$ is  convex, compact, and has finitely many extreme points. Let $E_a=\ext\left(M_a\right)$.

Given an interim value function $v:\Delta(\Theta)\rightarrow \mathbb{R}$, let $E$ denote the projection of $\ext(\epi(v)) $ on $\Delta(\Theta)$. 

\subsubsection{Lemmas}

\begin{lem} \label{lem:extreme-included}
For every $a$, $E_a\subseteq E$.
\end{lem}

\begin{proof}
	Suppose, by contradiction, that there exist some $a$ and  $\mu\in E_a\setminus E$. Since $\mu \notin E$, there exist $(\mu',w'),(\mu'',w'')\in \epi(v)$ and $\lambda\in(0,1)$ such that $\mu'\neq \mu''$, and 
	\[(\mu,v(\mu))=\lambda (\mu',w')+(1-\lambda)(\mu'',w''). \]
	
	Note that $(\mu,v(\mu))$ is a boundary point of $\epi(v)$, so by the supporting hyperplane theorem, there exist $h\in \mathbb{R}^{|\Theta|}$, and $c\in \mathbb{R}$ such that
	\[h\cdot (\mu,v(\mu))=c\quad \text{and}\quad h\cdot (\hat\mu,\hat w)\geq c,\quad \forall (\hat\mu,\hat w)\in \epi(v).\]
	Moreover, the last coordinate of $h$ must be positive, because $\epi(v)$ is not bounded above in its last dimension.
	
	Now we claim that both $(\mu',w')$ and $(\mu'',w'')$ must be on this supporting hyperplane. That is, $h\cdot (\mu',w')=h\cdot (\mu'',w'')=c$. Otherwise, if either $h\cdot (\mu',w')>c$ or $h\cdot (\mu'',w'')>c$, we have  
	$h\cdot(\mu,v(\mu))= \lambda h\cdot  (\mu',w')+(1-\lambda) h\cdot (\mu'',w'')> c$,  a contradiction.
	
	Moreover, since the last element of $h$ is positive, we must have $w'=v(\mu')$ and $w''=v(\mu'')$, otherwise $h\cdot(\mu,v(\mu))> \lambda h\cdot  (\mu',v(\mu'))+(1-\lambda) h\cdot (\mu'',v(\mu''))\geq c$, again a contradiction.

	Now we have $v(\mu)=\lambda v(\mu')+(1-\lambda)v(\mu'')$. By the definition of $v$, $v(\mu')\geq \sum_\theta \mu'(\theta)\rho(\theta,a)$ and $v(\mu'')\geq \sum_\theta \mu''(\theta)\rho(\theta,a)$. So  $v(\mu)\geq \lambda \sum_\theta \mu'(\theta)\rho(\theta,a)+(1-\lambda)\sum_\theta \mu''(\theta)\rho(\theta,a))=\sum_\theta \mu(\theta)\rho(\theta,a)=v(\mu)$. This means that $v(\mu')= \sum_\theta \mu'(\theta)\rho(\theta,a)$ and $v(\mu'')= \sum_\theta \mu''(\theta)\rho(\theta,a)$, which implies $\mu',\mu''\in M_a$. This  contradicts the assumption that $\mu\in E_a$ and concludes the proof.
 \end{proof}

 Given a finite collection of Blackwell experiments $P_1,...,P_m$, recall that $\mathcal{D}(P_1,...,P_m)$ denotes the set of Blackwell experiments that dominate $P_1,...,P_m$. Let $\mathcal{E}$ denote the set of all experiments with the induced posterior distribution supported  in $E$ and $\mathcal{E}(P_1,...,P_m)=\mathcal{D}(P_1,...,P_m)\cap \mathcal{E}$ denote the set of experiments that are more informative than $P_1,\ldots, P_m$ and have their support in $E$.

\begin{lem}\label{lemma:gridvalue}
For any experiment $P$, there exists $Q\in \mathcal{E}(P)$	such that $V(P)=V(Q)$.

\end{lem}

\begin{proof}
    Given a belief $\mu$, let $a$ be an action such that $\mu\in M_a$. Thus $\mu$ can be written as a convex combination of points in $E_a=\ext\left(M_a\right)$, as
    \[\sum_{\nu\in E_a}\gamma(\nu|\mu)\nu=\mu. \]
This defines a mean-preserving spread $\gamma:\Delta(\Theta)\rightarrow \Delta\left(\Delta(\Theta)\right)$ satisfying the property that beliefs in $M_a$ are taken to beliefs still in $M_a$. Now let $Q$ be an experiment such that $\tau^Q$ is a mean-preserving spread of $\tau^P$ through $\gamma$, that is, 
\[
\tau^Q(\nu)=\sum_{\mu} \gamma(\nu|\mu)\tau^P(\mu).
\]

By Theorem 1 in \cite{blackwell1953equivalent}, $Q\in \mathcal{D}(P)$ and since $Q$ has its support in $E$, we have that $Q\in\mathcal{E}(P)$. It remains to show that $V(P)=V(Q)$. Notice that, for each action $a\in A$, $v$ is linear within $M_a$. Thus
\begin{align*}
V(P)&=\sum_{\mu}	\tau^P(\mu) v(\mu)\\ 
             &=\sum_{\mu}	\tau^P(\mu) \, v\left(\sum_{\nu\in E}\gamma(\nu|\mu)\nu\right) \\
			 &=\sum_{\mu}	\tau^P(\mu) \sum_{\nu\in E}\gamma(\nu|\mu) v(\nu)\\
			 &= \sum_{\nu\in E } \sum_{\mu} \tau^P(\mu) \gamma(\nu|\mu) v(\nu)\\
			 &=\sum_{\nu\in E } \tau^Q(\nu) v(\nu)\\
            &=V(Q).
\end{align*}
\end{proof}

\begin{lem}\label{lem:discretesupport}
	\[V(P_1,...,P_m)=\min_{P\in \cap_{j=1}^m \mathcal{E}(P_j) }V(P)\]
\end{lem}
\begin{proof}
First note that, by \autoref{lem:relaxed-problem}
\[
	V(P_1,...,P_m)	=	\min_{P\in \mathcal{D}(P_1,...,P_m)}V(P)
				  \leq \min_{Q\in \mathcal{E}(P_1,...,P_m)}V(Q)
\]
since $\mathcal{E}(P_1,...,P_m)\subseteq \mathcal{D}(P_1,...,P_m)$. By \autoref{lemma:gridvalue}, for any $P\in \mathcal{D}(P_1,...,P_m)$, there exists a $Q \in \mathcal{E}(P)\subseteq \mathcal{E}(P_1,...,P_m)$ with $V(P)=V(Q)$, so the inequality above must actually be an equality. The proof is concluded by noting that
\[
\mathcal{E}(P_1,...,P_m)=\mathcal{D}(P_1,...,P_m)\cap \mathcal{E}=\cap_{j=1}^{m}{\mathcal{D}}(P_j)\cap \mathcal{E}=\cap_{j=1}^{m}\left({\mathcal{D}}(P_j)\cap \mathcal{E}\right)=\cap_{j=1}^{m}\mathcal{E}(P_j).
\]
\end{proof}

For any belief $\mu\in \Delta(\Theta)$, let
\[
N(\mu)=\left\{\eta\in \mathbb{R}^E \;\middle\vert\;
   \begin{array}{@{}l@{}}
   \sum\limits_{\nu\in E} \eta(\nu) \nu=\mu\\
   \eta \in \Delta(E)
   \end{array}
\right\} 
\]
be the set of distributions that spread the belief $\mu$ into points in $E$. Since the constraints that define $N(\mu)$ are linear and $\Delta(E)$ is bounded, $N(\mu)$ is a polytope. 

Given an experiment $P$,  we define $N^P$ as the $\tau^P$-weighted Minkowski sum of $N(\mu)$:
\[
N^P=\sum_{\mu} \tau^P(\mu)N(\mu). 
\]
Note that $N^P$ is also a polytope in $\mathbb{R}^E$.
\begin{lem}\label{lem:Npolytope}
Let $Q\in \mathcal{E}$. Then $Q\in \mathcal{E}(P)$ if and only if $\left(\tau^Q(\nu)\right)_{\nu\in E}\in N^P$.
\end{lem}
\begin{proof}
    $Q\in \mathcal{E}(P)$ if and only $\tau_Q$ is a mean-preserving spread of $\tau_P$,  i.e., there exists an $\eta:supp(\tau^P)\rightarrow \Delta E$, such that, for any $\mu\in supp(\tau^P)$ and $\nu\in E$,
    \begin{align}
        \mu&=\sum_{\nu\in E}\eta (\nu|\mu)\nu \label{mean-preserving}\\
        \tau^Q(\nu)&=\sum_\mu \eta(\nu|\mu)\tau^P(\mu).\label{equaltotauQ}
    \end{align}
Condition \eqref{mean-preserving} corresponds to the requirement that $\eta(\cdot|\mu)\in N(\mu)$. Thus, saying that $\tau_Q$ is a mean-preserving spread of $\tau_P$ is equivalent to saying that there exist $\eta(\cdot|\mu)\in N(\mu)$ such that \eqref{equaltotauQ} holds for all $\nu$ or, in other words, $\left(\tau^Q(\nu)\right)_{\nu\in E}\in N^P$.
\end{proof}

The set $E$ includes the extreme points of the simplex itself, $ \ext \left(\Delta(\Theta)\right)$; the remaining elements belong to the set of ``kinks'' $K$. We now show that a measure with support in $E$ is uniquely determined by its values on $K$.

\begin{lem}\label{lem:Kprojection}
    Let $\eta,\eta'\in N(\mu)$ be such that $\eta(\nu)=\eta'(\nu)$ for all $\nu\in K$. Then $\eta=\eta'$.
\end{lem}
\begin{proof}
    By definition, $E\setminus K=\ext \left(\Delta(\Theta)\right)=\{\delta_\theta|\theta\in \Theta\}$, where $\delta_\theta$ is the Dirac measure putting probability one on the state $\theta$. Given $\theta\in \Theta$, we have
    \[
    \mu (\theta)=\sum\limits_{\nu\in E} \eta(\nu) \nu(\theta)=\sum \limits_{\nu\in K} \eta(\nu) \nu(\theta)+ \eta\left(\delta_\theta\right).
    \]
Thus, if $\eta(\nu)=\eta'(\nu)$ for all $\nu\in K$, we must have $\eta(\delta_\theta)=\eta'(\delta_\theta)$ for all $\theta\in \Theta$, so $\eta=\eta'$.
\end{proof}

Now let $Z(\mu)$ be the projection of $N(\mu)$ on $\mathbb{R}^K$, so $Z^P$ is also the projection of $N^P$ on $\mathbb{R}^K$. As the projection of a polytope, $Z^P$ is also a polytope and we can write 
\[
Z^P=\sum_{\mu} \tau^P(\mu)Z(\mu). 
\]
The following lemma shows that $\mathcal{E}(P)$ is characterized by $Z^P$.
\begin{lem}\label{lem:Kpolytope}
Let $Q\in \mathcal{E}$. Then $Q\in \mathcal{E}(P)$ if and only if $\left(\tau^Q(\nu)\right)_{\nu\in K}\in Z^P$.
\end{lem}
\begin{proof}
    By \cref{lem:Npolytope}, $Q\in \mathcal{E}(P)$ if and only if $\left(\tau^Q(\nu)\right)_{\nu\in E}\in N^P$, that is, there exist $\eta(\cdot|\mu)\in N(\mu)$ for each $\mu$ in the support of $\tau^P$ such that 
    \[
    \tau^Q(\nu)=\sum_{\mu} \tau^P(\mu)\eta(\nu|\mu) \quad \forall \nu \in E.
    \]
    Clearly, if this holds, then $\left(\tau^Q(\nu)\right)_{\nu\in K}\in Z^P$, by definition of $Z^P$. Now suppose that $\left(\tau^Q(\nu)\right)_{\nu\in K}\in Z^P$, that is, there exist $\eta'(\cdot|\mu)\in Z(\mu)$ such that     
    \[
    \tau^Q(\nu)=\sum_{\mu} \tau^P(\mu)\eta'(\nu|\mu) \quad \forall \nu\in K.
    \]
    By \cref{lem:Kprojection}, for each $\mu$ there is a unique $\eta(\cdot|\mu)\in N(\mu)$ whose projection in $\mathbb{R}^K$ is $\eta'(\cdot|\mu)$. Thus $\tau^Q(\nu)=\sum_{\mu} \tau^P(\mu)\eta(\nu|\mu)$ for all $\nu\in K$ and applying \cref{lem:Kprojection} to $\left(\tau^Q(\nu)\right)_{\nu\in E}\in N(\mu_0)$, we conclude that this equality must hold for all $\nu\in E$, so that $\left(\tau^Q(\nu)\right)_{\nu\in E}\in N^P$.
\end{proof}

\begin{proof}[Proof of \autoref{thm:generalstate}]
    Let $k=|K|$. Our goal is to show that there is a robustly optimal strategy using at most $k$ experiments. By \cref{lem:discretesupport},
    \[
        V(P_1,...,P_m)=\min_{P\in \cap_{j=1}^m \mathcal{E}(P_j) }V(P)
    \]

    Given an experiment $Q$, we define $\tau^Q_K\coloneqq (\tau^Q(\nu))_{\nu\in K}$ and $\tau^Q(\delta_\theta)=\big[\mu_0(\theta)-\sum_{\ell=1}^k \tau^Q(\nu_\ell)\nu (\theta)\big]$. By \cref{lem:Kpolytope}, the problem can be written as 
    \begin{equation}\label{eq:valuepolytope}
        V(P_1,...,P_m)=\min_{\tau^Q_K\in \bigcap \limits_{j=1}^m Z^{P_j}} \left( \:\sum_{\nu\in K} \tau^Q(\nu) v(\nu)+\sum_{\theta\in \Theta}\tau^Q(\delta_\theta)v(\delta_\theta)\right)  
    \end{equation}
	
	Since each $Z^{P_j}$ is a polytope, so is their intersection, which is also non-empty because a fully informative information structure is always in $\mathcal{E}(P_j)$. Moreover, the objective function is affine in $\tau^Q_K\in \mathbb{R}^K$, so \eqref{eq:valuepolytope} can be written as a linear program:	
 \begin{equation}\label{eq:linear-program}
     \begin{aligned}
         V(P_1,...,P_m)&=\max_{x\in \mathbb{R}^k} c\cdot x+{constant} \\
         s.t. \quad A x&\leq b,
     \end{aligned}
 \end{equation}
	for some $c\in \mathbb{R}^k$. Because the set of constraints is non-empty and bounded, the problem has a solution $x^*$.

    Now, from a standard observation in linear programming (see \cref{lem:linear-programming} in \cref{sec:effect-constraints}), we can keep only $k$ effective constraints in the $k$ dimensional linear program \eqref{eq:linear-program}, without affecting the value of the problem. Let $I\subseteq \{1,...,N\}$ be an index set of these effective constraints, with $|I|= k$. Let $A[I]$ denote the $k\times k$ submatrix of $A$ with the rows in $I$. Similarly, let $b[I]$ denote the $k\times 1$ subvector of $b$ with the rows in $I$. Then  we have
	\[V(P_1,...,P_m)=\max c\cdot x+constant\]
	\[s.t. \quad A[I]x\leq b[I].\]
    Each constraint comes from some $Z^{P_j}$. Since we have $k$ constraints, there is a set $J\subseteq \{1,\ldots,m\}$ with $|J|\leq k$ such that every constraint comes from a $Z^{P_j}$ with $j\in J$. Not all constraints associated with each $Z^{P_j}$ are present in this linear program, but adding them back does not alter the value. This means that the problem can be reformulated back to 
    \[
    \min_{\tau^Q_K\in \bigcap \limits_{j\in J} Z^{P_j}}  \left( \:\sum_{\nu\in K} \tau^Q(\nu) v(\nu)+\sum_{\theta\in \Theta}\tau^Q(\delta_\theta)v(\delta_\theta)\right)=V(\{P_j\}_{j\in J}).
    \]
 Therefore, $V(P_1,...,P_m)=V(\{P_j\}_{j\in J})$.
\end{proof}

\subsection{Proof of \cref{cor:generalstate}}\label{proof:cor:generalstate}

\begin{proof}
Recall that $\epi(v)=\{(\mu,w)\in \Delta(\Theta)\times \mathbb{R}\:\vert \:w\geq v(\mu)\}$. Let $n=|\Theta|$. We can represent $\epi(v)$ as a polyhedron in $\mathbb{R}^n$ that is the intersection of $|A|+|\Theta|$ halfspaces, as follows:
\[
\left\{ (\mu_1,\ldots,\mu_{n-1},w )\in \mathbb{R}^{n} \;\middle\vert\;
   \begin{array}{@{}l@{}}
   w\geq \sum_{i=1}^{n-1} \mu_i \rho(\theta_i,a) +(1-\sum_{i=1}^{n-1}\mu_i)\rho(\theta_n,a)\quad \forall a\in A \\
   \mu_i\geq 0 \quad i=1,\ldots,n-1 \\ 
   \mu_1+\cdots+\mu_{n-1}\leqslant 1
   \end{array}
\right\}. 
\]
%\sum_{\theta\in\Theta} \mu(\theta)\rho(\theta,a)\quad \forall a\in A
Here, we simply replaced the set $\Delta(\Theta)$ by its first $n-1$ coordinates; the original element $\mu\in \Delta(\Theta)$ can be recovered by $\mu_n=1-\mu_1-\cdots-\mu_{n-1}$, so this change is inconsequential. In this representation, we have $|A|$ halfspaces corresponding to the constraints $ w\geq \sum_{\theta\in\Theta} \mu(\theta)\rho(\theta,a)$ and $|\Theta|=n$ constraints corresponding to the description of $\Delta(\Theta)$.

 This polyhedron is unbounded. To bound it, we also intersect $\epi(v)$ with an additional halfspace, creating a bounded polytope $B=\epi(v)\cap \{(\mu,w): w\leq \max_{\theta,a} u(\theta,a)+1\}$, which has at most $|A|+|\Theta|+1$ facets.

The Upper Bound Theorem (see Theorem 8.23 in \citet{ziegler2012lectures}) gives an upper bound on the number of facets that a polytope with a given number of vertices can have. Every polytope has a dual polytope (see Section 3.4 in \citet{grunbaum2003convex}), where each vertex corresponds to a facet and each facet corresponds to a vertex. Thus, we can apply the Upper Bound Theorem to the dual of $B$, which implies $B$ can have at most 
 \[\begin{pmatrix}
        |\Theta|+|A|+1- \left\lfloor \frac{|\Theta|+1}{2} \right\rfloor\\
        |A|+1
    \end{pmatrix}
   +\begin{pmatrix}
        |\Theta|+|A|+1- \left\lfloor \frac{|\Theta|+2}{2} \right\rfloor\\
        |A|+1
    \end{pmatrix}\]
    number of vertices. 
    
    These vertices include $\{(\delta_i,v(\delta_i))\}_{i=1}^n$ and  $\{(\delta_i,\max_{\theta,a} u(\theta,a)+1)\}_{i=1}^n$, which means the number of kinks can be no more than 
   \[ \begin{pmatrix}
        |\Theta|+|A|+1- \left\lfloor \frac{|\Theta|+1}{2} \right\rfloor\\
        |A|+1
    \end{pmatrix}
   +\begin{pmatrix}
        |\Theta|+|A|+1- \left\lfloor \frac{|\Theta|+2}{2} \right\rfloor\\
        |A|+1
    \end{pmatrix}-2|\Theta|.\] 
\end{proof}

\newpage
\section{Online Appendix}

\subsection{Proof of \cref{prop: convex-dominance}}\label{proof: convex-dominance}

To prove the proposition, it is useful to introduce the ``dominated by a convex combination'' notion in \cite{cheng2023dominance}. 
Let $\{P_1,...,P_k\}$ be a collection of Blackwell experiments, with disjoint signal spaces $Y_1,...,Y_k$. A convex combination of these Blackwell experiments, denoted by $\bigoplus_{j=1}^k \alpha_j P_j$, is a single Blackwell experiment with a signal space $Y_1\cup\cdots\cup Y_k$:
\[\bigoplus_{j=1}^k \alpha_j P_j(z|\theta)=\alpha_j P_j(z|\theta)1_{z\in Y_j}\]
where $\alpha_j\geq 0$ and $\sum_j \alpha_j=1$.

The following lemma directly follows from the ``if'' direction of Proposition 2 in \cite{cheng2023dominance}.

\begin{lem}
If for any decision problem $(A,u)$, $V(P_m;(A,u))\leq \max_{j=1,...,m-1}V(P_{j};(A,u))$, then $P_m$ is Blackwell dominated by a convex combination of $\{P_1,...,P_{m-1}\}$.
\end{lem}

The next lemma shows that any convex combination of $\{P_1,...,P_k\}$ is dominated by any joint experiments with marginals $P_1,...,P_k$.

\begin{lem}\label{lemma: convex-dominance}

For any $P\in\mathcal{J}(P_1,...,P_k)$ and any weights $\{\alpha_j\}_{j=1}^k$, $P$ Blackwell dominates $\bigoplus_{j = 1}^{k} \alpha_j P_j$.
\end{lem}
\begin{proof}
    For any $P\in\mathcal{J}(P_1,...,P_k)$, we construct the following garbling: $\gamma:Y_1\times ...\times Y_k\rightarrow \Delta (Y_1\cup \cdots\cup Y_k)$:
	\[\gamma(y|y_1,...,y_k)=\begin{cases}
	\alpha_j&\text{if }y=y_j,\\
	0&\text{otherwise}.
\end{cases}
 \] 
 Then for any $j$ and $y\in Y_j$,
 \begin{align*}
 	\sum_{y_1,...,y_k} \gamma (y|y_1,...,y_k)P(y_1,...,y_k|\theta)&=\sum_{y_{-j}}\alpha_j P(...,y_{j-1},y,y_{j+1}...|\theta)\\
 	&=\alpha_j P(y|\theta)\\
 	&=\bigoplus_{j = 1}^{k} \alpha_j P_j(y|\theta),
 \end{align*}
so $P$ Blackwell dominates $\bigoplus_{j = 1}^{k} \alpha_j P_j$.
\end{proof}

\begin{proof}[Proof of \cref{prop: convex-dominance}]
    For any decision problem $(A,u)$, let $P^*$ be the joint experiment solving 
    \[\min_{P\in \mathcal{J}(P_1,...,P_{m-1})}V(P;(A,u)).\]
From \cref{lemma: convex-dominance} and the transitivity of the Blackwell order, $P^*$ dominates $P_m$. So there exists $\gamma:Y_1\times ...\times Y_{m-1}\rightarrow \Delta Y_m$ such that 
\[P_m(y_m|\theta)=\sum_{y_1,...,y_{m-1}}\gamma(y_m|y_1,...,y_{m-1})P^*(y_1,...,y_{m-1}|\theta).\]

Now we construct the following $Q\in \mathcal{J}(P_1,...,P_m)$:
\[Q(y_1,...,y_m|\theta)=\gamma(y_m|y_1,...,y_{m-1})P^*(y_1,...,y_{m-1}|\theta)\]
	which by construction is Blackwell equivalent to $P^*$. Therefore,
\begin{align*}
    V(P_1,...,P_{m};(A,u))&=\min_{P\in \mathcal{J}(P_1,...,P_m)}V(P;(A,u))\\
    &\leq V(Q;(A,u))\\
    &=V(P^*;(A,u))\\
    &=V(P_1,...,P_{m-1};(A,u))\\
    &\leq   V(P_1,...,P_{m};(A,u))
\end{align*}
which proves the proposition.
\end{proof}

\subsection{Proof of \cref{prop:ex:marginals}} \label{proof:ex:marginals}

\begin{proof}

First observe that the agent's maxmin value is no more than her minmax value:
\begin{align*}
    V(\mathcal{P}_1, \ldots , \mathcal{P}_m)
&\leq  \min_{P\in \mathcal{J}(\mathcal{P}_1,...,\mathcal{P}_m)} \, \max_{\sigma:\mathbf{Y}\rightarrow \Delta (A)} \sum_\theta\sum_{\mathbf{y}}P(\mathbf{y}|\theta)u(\theta,\sigma(\mathbf{y}))
 \intertext{Now in the minmax problem, Nature's choice can be split into first choosing each marginal experiment $P_j\in \mathcal{P}_j$, and then choosing a joint experiment $P\in\mathcal{J}(P_1,...,P_m)$:}
 &=\min_{\substack{P_j\in \mathcal{P}_j\\j =1,...,m}}\min_{P\in \mathcal{J}({P}_1,...,{P}_m)} \, \max_{\sigma:\mathbf{Y}\rightarrow \Delta (A)} \sum_\theta\sum_{\mathbf{y}}P(\mathbf{y}|\theta)u(\theta,\sigma(\mathbf{y}))
  \intertext{And the value of the inner minmax problem is exactly $V(P_1,...,P_m)$, which equals $\max_j V(P_j)$ from \autoref{thm:binarybinary}:}
  &= \min_{\substack{P_j\in \mathcal{P}_j\\j =1,...,m}}\, \max_{j=1,...,m} V(P_j)\\
  &= \max_{j=1,...,m} V(\underline{P_j})
\end{align*}
where $\underline{P}_j\in \argmin_{P_j\in \mathcal{P}_j} V(P_j)$ is a worst experiment among the set $\mathcal{P}_j$ if the agent faces this information source solely. Let $j^*\in \argmax_j V(\underline{P_j})$, and consider the problem where the decision maker faces only a single set of marginal experiments $\mathcal{P}_{j^*}$:
%\begin{equation}\label{eq:marginal_maxmin}
\[
  V(\mathcal{P}_{j^*} )= \max_{\sigma:Y_{j^*}\rightarrow \Delta (A)}\,  \min_{{P_{j^*}}\in \mathcal{P}_{j^*} } \sum_\theta \sum_{y_{j^*}\in Y_{j^*}} P_{j^*}(y_{j^*}|\theta)u(\theta,\sigma(y_j^*)).
\]
%\end{equation}
Since $\mathcal{P}_{j^*}$ is convex, from the minmax theorem, the value of the problem equals 
\[ V(\mathcal{P}_{j^*} )=\min_{{P_{j^*}}\in \mathcal{P}_{j^*} } \, \max_{\sigma:Y_{j^*}\rightarrow \Delta (A)} \sum_\theta \sum_{y_{j^*}\in Y_{j^*}} P_{j^*}(y_{j^*}|\theta)u(\theta,\sigma(y_j^*))=V(\underline{P_{j^*}}).\]
So there exists a best-source strategy, using only signals from the experiment $P_{j^*}$, that guarantees the robustly optimal value $V(\underline{P_{j^*}})=\max_j V(\underline{P_j})\geq  V(\mathcal{P}_1, \ldots , \mathcal{P}_m)$.
\end{proof}

\subsection{Proof of \cref{prop: binary-recommendation}}\label{proof: binary-recommendation}

\begin{lem}[Single-Peaked Property] \label{lemma: single-peaked}
	Suppose in a binary-state decision problem $(A,u)$, every action is a unique best response to some belief, and actions are ordered as follows
\begin{align*}
 u(\theta_1,a_1) < u(\theta_1,a_2) < \cdots < u(\theta_1,a_n),\\
 u(\theta_2,a_1) > u(\theta_2,a_2) > \cdots > u(\theta_2,a_n).
\end{align*}
Then, for any belief $\mu\in \Delta (\Theta)$,
\[a_i\in \argmax_{a\in A} \sum_\theta \mu(\theta)u(\theta,a)\]
implies that for $k>j\geq i$, 
\[\sum_\theta \mu(\theta)u(\theta,a_j)\geq \sum_\theta \ \mu(\theta)u(\theta,a_{k})\] 
and for $k<j\leq i$,
\[\sum_\theta \mu(\theta)u(\theta,a_j)\geq \sum_\theta \ \mu(\theta)u(\theta,a_{k}).\]

\end{lem}
\begin{proof}

	Suppose by contradiction that there exists $k>j\geq i$, such that 
	\[\mu(\theta_1)u(\theta_1,a_j)+\mu(\theta_2)u(\theta_2,a_j)<\mu(\theta_1)u(\theta_1,a_k)+\mu(\theta_2)u(\theta_2,a_k).\]
	Rearranging, we obtain
	\[\mu(\theta_2)[u(\theta_2,a_j)-u(\theta_2,a_k)]<\mu(\theta_1)[u(\theta_1,a_k)-u(\theta_1,a_j)].\]
Given that $u(\theta_2,a_j)-u(\theta_2,a_k)>0$ and $u(\theta_1,a_k)-u(\theta_1,a_j)>0$, the inequality above still holds if we raise $\mu(\theta_1)$ (and consequently lower $\mu(\theta_2)$). That is, for any $\mu'\in \Delta(\Theta)$ such that $\mu'(\theta_1)\geq \mu(\theta_1)$, we have 
	\begin{equation}\label{eq: lemma-single-peaked-1}
		\mu'(\theta_1)u(\theta_1,a_j)+\mu'(\theta_2)u(\theta_2,a_j)<\mu'(\theta_1)u(\theta_1,a_k)+\mu'(\theta_2)u(\theta_2,a_k).
	\end{equation}
	
	Since $a_i$ is, by definition, 	a best response for $\mu$,
 \[\mu(\theta_1)u(\theta_1,a_j)+\mu(\theta_2)u(\theta_2,a_j)\leq \mu(\theta_1)u(\theta_1,a_i)+\mu(\theta_2)u(\theta_2,a_i).\]
Since $u(\theta_1,a_j)\geq u(\theta_1,a_i)$ and $u(\theta_2,a_j)\leq u(\theta_2,a_i)$, for any $\mu'\in \Delta(\Theta)$ such that $\mu'(\theta_1)\leq \mu(\theta_1)$, we have 
\begin{equation}\label{eq: lemma-single-peaked-2}
	\mu'(\theta_1)u(\theta_1,a_j)+\mu'(\theta_2)u(\theta_2,a_j)\leq \mu'(\theta_1)u(\theta_1,a_i)+\mu'(\theta_2)u(\theta_2,a_i)
\end{equation}
The inequalities \eqref{eq: lemma-single-peaked-1} and \eqref{eq: lemma-single-peaked-2} together imply that $a_j$ is never a unique best response to any belief, contradicting our assumption.

The case where $k<j\leq i$ follows from a similar argument.
\end{proof}

\begin{lem}\label{lemma: binary-recommendation} Let $(A_\ell,u_\ell)$ be a subproblem in a binary decomposition of $(A,u)$ and let $R_{P_j}$ be a recommendation information structure with respect to $(A,u)$. Then 
    \[V(P_j;(A_\ell,u_\ell))=V(R_{P_j};(A_\ell,u_\ell)).\]
\end{lem}
\begin{proof}
Recall that $P_j$ Blackwell dominates $R_{P_j}$, so $V(P_j;(A_\ell,u_\ell))\geq V(R_{P_j};(A_\ell,u_\ell))$. We prove the result by constructing a recommendation information structure $R_{P_j}^\ell$ for $(A_\ell,u_\ell)$ and showing that $V(R_{P_j};(A_\ell,u_\ell))\geq V(R_{P_j}^\ell;(A_\ell,u_\ell))=V(P_j;(A_\ell,u_\ell))$.

Recall that $R_{P_j}$ is defined using a garbling of $P_j$ given by $\sigma^*:Y_j\rightarrow A $ that satisfies, for each $y_j$ in the support,
\[\sigma^* (y_j)\in \argmax_{a\in A} \sum_{\theta} P_j(y_j|\theta) u(\theta,a).\]
From \cref{lemma: single-peaked}, if $a_i\in \argmax_{a\in A} \sum_{\theta} P_j(y_j|\theta) u(\theta,a)$, for all $i \leq \ell\leq n-1$, $\sum_{\theta} P_j(y_j|\theta) u(\theta,a_\ell)\geq \sum_{\theta} P_j(y_j|\theta) u(\theta,a_{\ell+1})$, and for all $2 \leq \ell\leq i$, $\sum_{\theta} P_j(y_j|\theta) u(\theta,a_\ell)\geq \sum_{\theta} P_j(y_j|\theta) u(\theta,a_{\ell-1})$. This means that, if $R_{P_j}$ recommends action $a_i$, then $1\in A_\ell$ is optimal for the subproblems with $i\leqslant \ell$ and $0\in A_\ell$ is optimal for the subproblems with $i>\ell$. Now let $\gamma_\ell:  A \rightarrow \{0,1\}$ be the garbling defined by
\[\gamma_\ell (a_i)=\begin{cases}
0 &\text{ if }i\leq \ell\\
1 &\text{ if }i>\ell. 	
\end{cases}
\]
By construction, for each $y_i$ in the support,
\[\gamma_\ell(\sigma^*(y_j))\in \argmax_{a\in A_\ell } \sum_{\theta,y_j} P_j(y_j|\theta) u_\ell (\theta,a),\]
so the experiment $R_{P_j}^\ell$, induced by garbling $P_j$ according to $\gamma_\ell\circ \sigma^*:Y_j\rightarrow A$, is a recommendation information structure for the decision problem $(A_\ell,u_\ell)$, so $V(R_{P_j}^\ell;(A_\ell,u_\ell))=V(P_j;(A_\ell,u_\ell))$. Moreover, by construction, $R_{P_j}$ Blackwell dominates $R_{P_j}^\ell$, so $ V(R_{P_j};(A_\ell,u_\ell))\geq V(R_{P_j}^\ell;(A_\ell,u_\ell))$.

\end{proof}

\begin{proof}[Proof of \cref{prop: binary-recommendation}]
Let $\bigoplus_{\ell=1}^{k} (A_\ell,u_\ell)$ be a binary decomposition of $(A,u)$. From \cref{thm:binary-general} and \cref{lemma: binary-recommendation},
\begin{align*}
    V(P_1,\ldots,P_m;(A,u))&=\sum_{l=1}^k \max_{j=1,...,m} V(P_j;(A_\ell,u_\ell)) \\
                         &=\sum_{l=1}^k \max_{j=1,...,m} V(R_{P_j};(A_\ell,u_\ell))\\
                         &= V(R_{P_j},\ldots,R_{P_j};(A,u)).
\end{align*}
\end{proof}

\subsection{Effective Constraints in Linear Programming}
\label{sec:effect-constraints}

The following lemma, which we use in the proof of \cref{thm:generalstate}, states that a $k$-dimensional linear programming problem has at most $k$ effective constraints.
\begin{lem}\label{lem:linear-programming}
	Consider a feasible and bounded linear programming problem
	\[V=\max_{x\in \mathbb{R}^k} c\cdot x\]
	\[s.t.\quad Ax\leq b\]
where $c\in \mathbb{R}^k$ and $A$ is an $m\times k$ matrix with rank $k$, and $b$ is an $m\times 1$ vector. There exists a full-rank $k\times k$ submatrix $\tilde{A}$ of $A$ with the corresponding $k\times 1$ subvector $\tilde{b}$ such that 
\[V=\max_{x\in \mathbb{R}^k} c\cdot x\]
	\[s.t.\quad \tilde{A}x\leq \tilde{b}\]
\end{lem}
\begin{proof}
The dual problem of the linear programming problem is 
\[V=\min_{y\in \mathbb{R}^m} b\cdot y \]
\[s.t. \quad y^T A = c\]
\[y\geq 0\]
From Lemma 4.6 and Theorem 4.7 of \cite{vohra2004advanced}, a solution to this dual problem is a basic feasible solution, so there exists a full-rank $k\times k$ submatrix $\tilde{A}$ of $A$ with the corresponding $k\times 1$ subvector $\tilde{b}$ such that 
\[V=\min_{y\in \mathbb{R}^k} \tilde{b}\cdot y \]
\[s.t. \quad y^T \tilde{A} = c\]
\[y\geq 0\]
Taking the dual again, we have 
\[V=\max_{x\in \mathbb{R}^k} c\cdot x\]
	\[s.t.\quad \tilde{A}x\leq \tilde{b}.\]
\end{proof}

\subsection{Proof of Uniqueness for  \cref{thm:binarybinary}} \label{sec:proof-uniqueness}

Consider any binary-state binary-action decision problem, denoted by $(A^{bi},u^{bi})$. Without loss of generality, suppose $P_1$ is the unique best marginal information source:  $V(P_1;(A^{bi},u^{bi}))>V(P_j;(A^{bi},u^{bi}))$ for $j\neq 1$.

\subsubsection{Payoff Sets}
Recall that as in \cref{sec:decomposition}, any binary-state decision problem $(A,u)$ induces a payoff polyhedron:
\[\po(A,u)=co\{u(\cdot,a):a\in A\}-\mathbb{R}_+^{2},\]
which captures the feasible payoff vectors that can be achieved by the decision maker when allowing for free disposal of utils.  Such a polyhedron is upper bounded, convex, closed, and has a finite number of extreme points. 

\begin{defn}
A non-empty subset $D\subseteq \mathbb{R}^{|\Theta|}$	is a \textbf{payoff set} if $D$ is upper bounded, convex, closed, and has a finite number of extreme points.
\end{defn}

For any payoff set $D$, we define the robustly optimal value in a manner similar to that for decision problems:
\[W(P_1,...,P_m;D)=\max_{t:\mathbf{Y}\rightarrow D} \, \min _{P\in \mathcal{J}(P_1,...,P_m)}\sum_{\mathbf{y}} \mathbf{P}(\mathbf{y})\cdot t(\mathbf{y})\]
where $\mathbf{P}(\mathbf{y})=P(\mathbf{y}|\cdot )\in \mathbb{R}^{|\Theta|}$ denotes the vector corresponding to the probability of $\mathbf{y}$ in each state.

If only a single experiment $P:\Theta\rightarrow \Delta (Y)$ is considered ($m=1$),
\[W(P;D)=\max_{t:Y\rightarrow D} \sum_{y} \mathbf{P}({y})\cdot t(y). \]
Note that the value for a payoff set is tightly connected to the value of the decision problem that induces it. Specifically, we have $V(P_1,...,P_m;(A,u))=W(P_1,...,P_m;\po(A,u))$.

\bigskip
Similar to $V$, $W$ also has the property that having access to more experiments can be no worse than having access to just one experiment.

\begin{lem}\label{lem:one-marginal}
For any decision problem $D$,
\[W(P_1,...,P_m;D)\geq 	W(P_1;D) \]
\end{lem}
\begin{proof}
Suppose $t_1:Y_1\rightarrow D$ is the solution to $W(P_1;D)$. Define $\tilde t:Y_1\times \cdots \times Y_m\rightarrow D$  as $\tilde t(y_1,...,y_m)=t_1(y_1)$, and we have
\[W(P_1,...,P_m;D)\geq \min _{P\in \mathcal{P}(P_1,...,P_m)}\sum_{\mathbf{y}} \mathbf{P}(\mathbf{y})\cdot \tilde t(\mathbf{y})=\sum_{y_1}P_1(y_1)\cdot t_1(y)=W(P_1;D).\]
\end{proof}

Another  useful property of $W$ is its separability with respect to payoff sets, analogous to the separability of $V$ with respect to separable decision problems.

\begin{lem}Let $C,D\subseteq \mathbb{R}^2$ be two payoff sets, and $C+D$ denote their Minkowski sum. Then
\[W(P;C+D)=W(P;C)+W(P;D).\]
\end{lem}
\begin{proof}
	Let $t_C^*$ and $t^*_D$ be solutions to $W(P;C)$ and $W(P;C)$, respectively. Define $t:Y\rightarrow C+D$ to be $t(y)=t^*_C(y)+t^*_D(y)$. Then 
	\begin{align*}
	W(P;C+D)&\geq \sum_{{y}} \mathbf{P}({y})\cdot t(y)\\
		 &= \sum_{{y}} \mathbf{P}({y})\cdot (t^*_C(y)+t^*_D(y))\\
		&=\sum_y\mathbf{P}({y})\cdot t^*_C(y)+\sum_y\mathbf{P}({y})\cdot t^*_D(y)\\
		&=W(P;C)+W(P;D).
	\end{align*}
Conversely, let $t^*$ be a solution to $W(P;C+D)$. Then for any $y$, there exists $c_y\in C$ and $d_y\in D$ such that $t^*(y)=c_y+d_y$. Define $t_C(y)=c_y$ and $t_D(y)=d_y$, then 
\begin{align*}
	W(P;C)+W(P;D)&\geq \sum_y\mathbf{P}({y})\cdot t_C(y)+\sum_y\mathbf{P}({y})\cdot t_D(y) \\
	&=\sum_y\mathbf{P}({y})\cdot  t^*(y)\\
	&=W(P;C+D).
\end{align*}
\end{proof}

\subsubsection{Binary-Action Decision Problems}

Now we return to the binary action decision problem $(A^{bi},u^{bi})$. The payoff polyhedron corresponding to $(A^{bi},u^{bi})$ can be represented as intersection of three halfspaces:
\[\po(A^{bi},u^{bi})=\bigcap_{\beta \in \mathcal{B}_{(A^{bi},u^{bi})}} \{v\in \mathbb{R}^2:\beta\cdot v\leq k_\beta\} \]
where $\mathcal{B}_{(A^{bi},u^{bi})}= \{e_1,e_2,\beta^*\}$ with $e_1=(1,0)$, $e_2=(0,1)$, and $\beta^*\in \mathbb{R}^2_{++}$ denote the set of normal vectors, and $k_{e_1}=\max_{a\in A}u(\theta=1,a)$, $k_{e_2}=\max_{a\in A}u(\theta=2,a)$, and $k_{\beta^*}\in \mathbb{R}$. This is visualized in \cref{fig:polyhedron}.

The set of normal vectors, $\mathcal{B}_{(A^{bi},u^{bi})}$, depends on the binary action decision problem, where $\beta^*$ is proportional to the belief at which the decision maker is indifferent between the two actions. Since the decision problem $(A^{bi},u^{bi})$ is fixed, for notational simplicity, we will henceforth omit the dependence of $\mathcal{B}$ on $(A^{bi},u^{bi})$.

\begin{figure}[htp]
    \centering
    \begin{tikzpicture}[domain=0:3, scale=4.8, ultra thick]    
        \fill[orange!10] (-0.2,-0.1)--(-0.2,0.8)--(0.4,0.8)--(0.8,0.4)--(0.8,-0.1); 
    	\draw[<->] (0,1.1) node[left]{$\theta=2$} --(0,0)--(1.1,0)node[below]{$\theta=1$};

     	\filldraw[blue] (0.4,0.8)circle (0.1pt)node[right,]{$u(\cdot,a_2)$};
        \filldraw[blue] (0.8,0.4)circle (0.1pt)node[right]{$u(\cdot,a_1)$};
    
        \draw[red,->] (0.2,0.81)--(0.2,1); 
        \draw[red,->] (0.81,0.2)--(1,0.2);
        \draw[red] (0.2,0.9)node[left]{$e_2$}  (0.9,0.2)node[below]{$e_1$};
        \draw[red,->] (0.6,0.6)--(0.74,0.74)node[below,yshift=-5,xshift=-2]{$\beta^*$};
        
        \draw[orange] (0.47,0.4)node{$\po(A,u)$};	
	\end{tikzpicture}   
    \caption{Payoff polyhedron for a binary-state binary-action problem}
    \label{fig:polyhedron}
\end{figure}

We next define payoff sets that have the same shape as the $\po(A^{bi},u^{bi})$.

\begin{defn}
    A payoff set $D\subset \mathbb{R}^2$ is a $\mathcal{B}$-shape polyhedron if 
    \[D=\bigcap_{\beta \in \mathcal{B}} \{v\in \mathbb{R}^2:\beta\cdot v\leq k_\beta\} \]
    for some constants $\{k_\beta\}_{\beta\in\mathcal{B}}\in \mathbb{R}$.
\end{defn}

Note that the constraint $\beta^*\cdot v\leq k_{\beta^* }$ may be redundant in a $\mathcal{B}$-shape polyhedron, in which case the polyhedron is an unbounded rectangle. Such a polyhedron can be represented as $\{v:v\leq v^*\}$ for some $v^*\in \mathbb{R}^2$ and corresponds to a single-action decision problem. We call such a $\mathcal{B}$-shape polyhedron \textit{trivial}. 

Clearly, if $D$ is a trivial $\mathcal{B}$-shape polyhedron, $W(P;D)=W(P';D)$ for any $P,P'$. The next lemma shows that for any non-trivial $\mathcal{B}$-shape polyhedron, the relative value  of experiments under $(A^{bi},u^{bi})$ is preserved.

\begin{lem}\label{lem:strict-higher-value}
	If $D$ is a non-trivial $\mathcal{B}-$shape polyhedron, then $W(P_1;D)>\max_{j\neq 1} W(P_j;D)$.
\end{lem}
\begin{proof}
Any non-trivial $\mathcal{B}$-shape polyhedron $D$ has two extreme points, denoted by $ex(D)_1$ and $ex(D)_2$. See \cref{fig:Lambda-polyhedron-extreme} for an illustration.

\begin{figure}[htp]
    \centering
    \begin{tikzpicture}[domain=0:3, scale=4.8, ultra thick]    
        \fill[orange!10] (-0.2,-0.1)--(-0.2,0.8)--(0.4,0.8)--(0.8,0.4)--(0.8,-0.1); 
    	\draw[<->] (0,1.1) node[left]{$\theta=2$} --(0,0)--(1.1,0)node[below]{$\theta=1$};

     	\filldraw[blue] (0.4,0.8)circle (0.1pt)node[right,]{$ex(D)_2$};
        \filldraw[blue] (0.8,0.4)circle (0.1pt)node[right]{$ex(D)_1$};
    
        \draw[red,->] (0.2,0.81)--(0.2,1); 
        \draw[red,->] (0.81,0.2)--(1,0.2);
        \draw[red] (0.2,0.9)node[left]{$e_2$}  (0.9,0.2)node[below]{$e_1$};
        \draw[red,->] (0.6,0.6)--(0.74,0.74)node[below,yshift=-5,xshift=-2]{$\beta^*$};
        
        \draw[orange] (0.47,0.4)node{$D$};	
	\end{tikzpicture}   
    \caption{Extreme points of a non-trivial $\mathcal{B}$-polyhedron}
    \label{fig:Lambda-polyhedron-extreme}
\end{figure}

The two extreme points are defined by two linear equations: \[\begin{pmatrix}
     e_1 \\
     \beta^*
 \end{pmatrix} v=\begin{pmatrix}
     k_{e_1}\\
     k_{\beta^*}
 \end{pmatrix} \qquad \qquad \begin{pmatrix}
     e_2 \\
     \beta^*
 \end{pmatrix} v=\begin{pmatrix}
     k_{e_2}\\
     k_{\beta^*}
 \end{pmatrix},\]
with the closed-form solutions $ex(D)_1=\begin{pmatrix}
	k_{e_1}\\
	\frac{k_{\beta^*}-\beta_1^*k_{e_1}}{\beta_2^*}
\end{pmatrix}$ and $ex(D)_2=\begin{pmatrix}
	\frac{k_{\beta^*}-\beta_2^*k_{e_2}}{\beta_1^*}\\
		k_{e_2}\\
\end{pmatrix}$. A useful observation is that $(ex(D)_2-ex(D)_1)=(k_{\beta^*}-k_{e_1}\beta_1^*-k_{e_2}\beta_2^*)\begin{pmatrix}
    -\frac{1}{\beta_1^*}\\
    \frac{1}{\beta_2^*}
\end{pmatrix}.$ That is, $\beta^*$ determines the direction of the vector $(ex(D)_2-ex(D)_1)$, and the constant terms $k_\beta$ only affect the scalar multiplier. Moreover, the multiplier $(k_{\beta^*}-k_{e_1}\beta_1^*-k_{e_2}\beta_2^*)>0$, because $(k_{e_1},k_{e_2})\in int(D)$ and $k_{\beta^*}=\max_{v\in D} \beta^*\cdot v$.

For any $\mathcal{B}$-shape polyhedron $D$, and any $P_j$,
\begin{align*}
    W(P_j;D)&=\max_{t_j:Y_j\rightarrow D} \sum_{y_j\in Y_j} \mathbf{P}_j(y_j)\cdot t_j(y_j)
\end{align*}
Since the objective function is linear and the  extreme points of $D$ are $ex(D)_1$ and $ex(D)_2$, a solution to the problem is 
\[t_j^*(y_j)=\begin{cases}
    ex(D)_1 & \text{ if } \mathbf{P}_j(y_j)\cdot  \begin{pmatrix}
    -\frac{1}{\beta_1^*}\\
    \frac{1}{\beta_2^*}
\end{pmatrix}\leq  0\\
    ex(D)_2 & \text{ if } \mathbf{P}_j(y_j)\cdot  \begin{pmatrix}
    -\frac{1}{\beta_1^*}\\
    \frac{1}{\beta_2^*}
\end{pmatrix}>  0.
\end{cases}\]
For each $P_j$, let $\tilde{Y}_j=\{y\in Y_j:\mathbf{P}_j(y_j)\cdot  \begin{pmatrix}
    -\frac{1}{\beta_1^*}\\
    \frac{1}{\beta_2^*}
\end{pmatrix}\leq 0\}$, and we can rewrite:
\[W(P_j;D)=\sum_{y_j\in \tilde{Y}_j} \mathbf{P}_j(y_j)\cdot ex(D)_1+\sum_{y_j\in \tilde{Y}_j/\tilde{Y}_j} \mathbf{P}_j(y_j)\cdot ex(D)_2.\]
Let $\mathbf{x}_{P_j}=\sum_{y_j\in \tilde{Y}_j} \mathbf{P}_j(y_j)$, then 
\begin{align*}
    W(P_j;D)&=\mathbf{x}_{P_j}\cdot ex(D)_1+(\boldsymbol{1}-\mathbf{x}_{P_j})\cdot ex(D)_2\\
            &=\boldsymbol{1}\cdot ex(D)_2 + \mathbf{x}_{P_j}\cdot (ex(D)_1-ex(D)_2).
\end{align*}
Now consider any $j\neq 1$, we have 
\begin{align*}
      W({P}_1;D)-W(P_j;D)&=(\mathbf{x}_{P_j}-\mathbf{x}_{P_1})\cdot (ex(D)_2-ex(D)_1)\\
      &=(k_{\beta^*}-k_{e_1}\beta_1^*-k_{e_2}\beta_2^*)(\mathbf{x}_{P_j}-\mathbf{x}_{P_1})\cdot\begin{pmatrix}
    -\frac{1}{\beta_1^*}\\
    \frac{1}{\beta_2^*}
\end{pmatrix}
\end{align*}
Note that for different non-trivial $\mathcal{B}$-shape polyhedra $D$ (i.e., different parameters $k_{e_1}, k_{e_2},k_{\beta^*})$, the above value differs only by a positive constant factor.  This implies that if $W(P_1;D)-W(P_j;D)>0$ for one non-trivial $\mathcal{B}$-shape polyhedron, the value is also strictly positive for any non-trivial $\mathcal{B}$-shape polyhedron.

Recall that \[W(P_1;\po(A^{bi},u^{bi}))-W(P_j;\po(A^{bi},u^{bi}))=V(P_1;(A^{bi},u^{bi}))-V(P_j;(A^{bi},u^{bi}))>0\]
where $\po(A^{bi},u^{bi})$ is a $\mathcal{B}$-shape polyhedron. Therefore,
\[W(P_1;D)-W(P_j;D)>0,\] 
for any non-trivial $\mathcal{B}$-shape polyhedron.

\end{proof}

\subsubsection{$\mathcal{B}$-cover}

For any payoff set $D$, we define the smallest $\mathcal{B}$-shape polyhedron that covers $D$ as its \textit{$\mathcal{B}$-cover}. See \cref{fig:Lambda-minimal cover} for an illustration.

\begin{defn}
For any payoff set $D$, its \textbf{$\mathcal{B}$-cover} is defined as     \[cov_\mathcal{B} (D) \doteq \bigcap_{\beta\in \mathcal{B}}\{v: \beta\cdot v \leq \rho_{D}(\beta)\},\]
where $\rho_D(\beta)=\sup_{v\in D}\beta\cdot D$ is the support function of $D$.
\end{defn}

\begin{figure}[htp]
	\centering
\subfigure[A payoff set $D$ derived from some three-action decision problem]{
    	\begin{tikzpicture}[domain=0:3, scale=4.8, ultra thick]    
    \fill[orange!10] (-0.2,-0.1)--(-0.2,0.72)--(0.3,0.72)--(0.6,0.6)--(0.72,0.3)--(0.72,-0.1); 
	\draw[<->] (0,1.1) --(0,0)--(1.1,0);

    \draw[orange] (-0.2,0.72)--(0.3,0.72)--(0.6,0.6)--(0.72,0.3)--(0.72,-0.1);

    \filldraw[blue] (0.3,0.72)circle (0.1pt)node[above,xshift=7]{};
    \filldraw[blue] (0.6,0.6)circle (0.1pt)node[right,yshift=-5]{};
    \filldraw[blue] (0.72,0.3)circle (0.1pt)node[right]{};

    \draw[orange] (0.26,0.3)node{$D$};
	
	\end{tikzpicture}   
}
\hspace{0.2in}
\subfigure[The corresponding $\mathcal{B}$-cover $cov_\mathcal{B} (D)$]
{
\begin{tikzpicture}[domain=0:3, scale=4.8, ultra thick]    
    \fill[red!10] (-0.2,-0.1)--(-0.2,0.72)--(0.48,0.72)--(0.72,0.48)--(0.72,-0.1); 
	\draw[<->] (0,1.1) --(0,0)--(1.1,0);

    \draw[red] (-0.2,0.72)--(0.48,0.72)--(0.72,0.48)--(0.72,-0.1);

    \draw[orange] (-0.2,0.72)--(0.3,0.72)--(0.6,0.6)--(0.72,0.3)--(0.72,-0.1);

    \filldraw[blue] (0.3,0.72)circle (0.1pt);
    \filldraw[blue] (0.6,0.6)circle (0.1pt);
    \filldraw[blue] (0.72,0.3)circle (0.1pt);
    
    \draw[red,->] (0.2,0.73)--(0.2,0.9); 
    \draw[red,->] (0.73,0.2)--(0.9,0.2);
    \draw[red] (0.2,0.81)node[left]{$e_2$}  (0.83,0.2)node[below]{$e_1$};
    \draw[red,->] (0.6,0.6)--(0.74,0.74)node[below,yshift=-5,xshift=-2]{$\beta^*$};

    \draw[red] (0.26,0.3)node{$cov_\mathcal{B} (D)$};
	
	\end{tikzpicture}   
}
	
	\caption{}
	\label{fig:Lambda-minimal cover}
\end{figure}

We state a few properties of $\mathcal{B}$-cover that will be useful in our analysis.
\begin{lem}\label{lem:cov-properties}
\begin{enumerate}
    \item (Monotonicity) If $D\subseteq D'$, $cov_\mathcal{B} (D)\subseteq cov_\mathcal{B} (D')$.
    \item (Reflexive) If $D$ is a $\mathcal{B}$-shape polyhedron, $cov_\mathcal{B} (D)=D$.
    \item (Superadditivity) $cov_\mathcal{B} (D+D')\supseteq cov_\mathcal{B} (D)+cov_\mathcal{B} (D')$
    \item (Preserving Triviality) If $cov_{\mathcal{B}}(D)$ is trivial, then there exists a maximum in $D$. That is, $\exists \bar{v}\in D$ such that $v\leq \bar{v}$ for all $v\in D$.
\end{enumerate}
\end{lem}

\begin{proof}
\begin{enumerate}
	\item Since $D\subseteq D'$, $\rho_D(\beta)\leq \rho_{D'}(\beta)$ for all $\beta\in \mathcal{B}$. Therefore,
	\[ \bigcap_{\beta\in \mathcal{B}}\{v: \beta\cdot v \leq \rho_{D}(\beta)\}\subseteq \bigcap_{\beta\in \mathcal{B}}\{v: \beta\cdot v \leq \rho_{D'}(\beta)\}. \]
	\item Clearly $D\subseteq cov_\mathcal{B}(D)$, because for every $v\in D$ and every $\beta\in\mathcal{B}$, $\beta\cdot v \leq \rho_D(\beta)$.

	Now consider any $\mathcal{B}$-shape polyhedron, represented by \[D=\bigcap_{\beta \in \mathcal{B}} \{v\in \mathbb{R}^2:\beta\cdot v\leq k_\beta\} \]
    for some $\{k_\beta\}_{\beta\in\mathcal{B}}\in \mathbb{R}^2$. Note that for all $\beta\in\mathcal{B}$ and $v\in D$, $\beta\cdot v\leq k_\beta$, so we have  $\rho_D(\beta)=\max_{v\in D} \beta\cdot v\leq k_\beta$. Therefore,
    \[cov_\mathcal{B}(D)= \bigcap_{\beta\in \mathcal{B}}\{v: \beta\cdot v \leq \rho_{D}(\beta)\}\subseteq \bigcap_{\beta \in \mathcal{B}} \{v:\beta\cdot v\leq k_\beta\}=D,\]
    which implies $cov_\mathcal{B}(D)=D$.
    
	\item For any $\tilde{v}\in cov_\mathcal{B}(D)+cov_\mathcal{B}(D')$, there exists $v\in cov_\mathcal{B} (D)$ and $v'\in cov_\beta(D')$ such that $\tilde{v}=v+v'$. Since $v\in cov_\mathcal{B} (D)$ and $v'\in cov_\beta(D')$, we have $\beta\cdot v\leq \rho_D(\beta)$ and $\beta\cdot v'\leq \rho_{D'}(\beta)$ for all $\beta\in\mathcal{B}$. Therefore, for every $\beta\in\mathcal{B}$, $\beta \cdot \tilde{v}=\beta\cdot (v+v')\leq\rho_D(\beta)+\rho_{D'}(\beta)=\rho_{D+D'}(\beta)$, which implies $\tilde{v}\in cov_\mathcal{B}(D+D')$.

    \item If $cov_\mathcal{B}(D)$ is trivial, the constraint $\beta^*\cdot v\leq \rho_D(\beta^*)$ is redundant. That is $\{v:\beta^*\cdot v\leq \rho_D(\beta^*)\}\supseteq \{v:e_1 \cdot v\leq \rho_D(e_1)\}\cap \{v:e_2\cdot v\leq \rho_D(e_2)\}$. 
    
    Let $\bar{v}_1=\max_{v\in D}e_1\cdot v$ and $\bar{v}_2=\max_{v\in D}e_2\cdot v$. We claim that $\bar{v}=(\bar{v}_1,\bar{v}_2)\in D$. Suppose not, then we have $\max_{v\in D} \beta^*\cdot v<\beta^* \cdot \bar{v}$.  However,  $\bar{v}\in \{v:e_1 \cdot v\leq \rho_D(e_1)\}\cap \{v:e_2\cdot v\leq \rho_D(e_2)\}$ but $\bar{v}\notin \{v:\beta^*\cdot v\leq \rho_D(\beta^*)\}$, contradicting to the constraint $\beta^*\cdot v\leq \rho_D(\beta^*)$ being redundant. Thus, $\bar{v}\in D$ and for all $v\in D$, $v\leq \bar{v}$, which concludes the proof.

\end{enumerate}
\end{proof}

\subsubsection{Dominance}
We say a collection of payoff sets $D_1,...,D_k\subseteq \mathbb{R}^{|\Theta|}$ is \textit{dominated} by $D$ if 
\[ D_1+\cdots +D_k\subseteq D.\]

The following observation is immediate:
\begin{lem}\label{lem:dominance}
If $\{D_\ell\}_{\ell=1}^k$ is dominated by $D$,
		\[W(P_1,...,P_m;D)\geq \sum_{\ell=1}^k W(P_1,...,P_m;D_\ell).\] 	
\end{lem}
\begin{proof}
	Let $t_\ell $ be a maxmin strategy to $W(P_1,...,P_m;D_\ell)$. Construct 
	\begin{align*}
	t:\mathbf{Y}&\rightarrow D\\
		y&\mapsto \sum_{\ell=1}^k t_\ell(y).
	\end{align*}
	Then 
	\begin{align*}
		W(P_1,...,P_m;D)&\geq \min_{P\in\mathcal{J}}\sum_{{y}} \mathbf{P}({y})\cdot t(y)\\
		&=\min_{P\in\mathcal{J}}\sum_{{y}} \mathbf{P}({y})\cdot \sum_{\ell=1}^k t_\ell(y)\\
		&=\min_{P\in\mathcal{J}}\sum_{\ell=1}^k \sum_{{y}} \mathbf{P}({y})\cdot  t_\ell(y)\\
		&\geq \sum_{\ell=1}^k   \min_{P\in\mathcal{J}} \sum_{{y}} \mathbf{P}({y})\cdot  t_\ell(y)\\
		&=\sum_{\ell=1}^k W(P_1,...,P_m;D_\ell).
	\end{align*}
\end{proof}

Next, we present the key lemma underlying our uniqueness theorem.
\begin{lem}\label{lem:other-trivial}
Suppose a collection of  decision problems $D_1,...,D_m$ is dominated by a $\mathcal{B}$-shape polyhedron $D$, and satisfies 
\[\sum_{j=1}^m W(P_j;D_j)\geq W(P_1,...,P_m;D).\]
Then $cov_{\mathcal{B}}(D_j)$ must be trivial for all $j\neq 1$.
\end{lem}
\begin{proof}

Since $D_1+\cdots+D_m\subseteq D$, from properties 1 and 2 in \cref{lem:cov-properties}, 
\[cov_\mathcal{B}(D_1+\cdots+D_m)\subseteq cov_\mathcal{B}(D)= D.\]
From property 3 in \cref{lem:cov-properties}, 
\[cov_\mathcal{B}(D_1)+\cdots+cov_\mathcal{B}(D_m)\subseteq cov_\mathcal{B}(D_1+\cdots+D_m),\]
so $cov_\mathcal{B}(D_1)$, $\cdots$, $cov_\mathcal{B}(D_m)$ is also dominated by $D$.

Now suppose by contradiction that $cov_\mathcal{B}(D_j)$ is not trivial for some $j\neq 1$. Then
\begin{align*}
	W(P_1,...,P_m;D)&\geq \sum_{j=1}^m W(P_1,...,P_m;cov_\mathcal{B}(D_j))\\
	&\geq \sum_{j=1}^m W(P_1;cov_\mathcal{B}(D_j)) \\
	&>\sum_{j=1}^m W(P_j;cov_\mathcal{B}(D_j) ) \\
	&\geq \sum_{j=1}^m W(P_j;D_j ) 
\end{align*}
where the first inequality follows from  \cref{lem:dominance}, second inequality follows from \cref{lem:one-marginal}, the third inequality follows from \cref{lem:strict-higher-value}, and the last inequality follows from $cov(D_j)\supseteq D_j$. Therefore, it contradicts to $\sum_{j=1}^m W(P_j;D_j)\geq W(P_1,...,P_m;D)$, and $D_j$ must be trivial for all $j\neq 1$.
\end{proof}

\subsubsection{Common Support of the Blackwell Supremum}
\begin{lem}\label{lem:common-support}
	Suppose $P_j(y_j|\theta)>0$ for all $j,y_j,\theta$, and $P^*\in \mathcal{J}(P_1,...,P_m)$ is a Blackwell supremum of $P_1,...,P_m$. Then, $P^*(\cdot|\theta_1)$ and $P^*(\cdot|\theta_2)$ have common support; that is, for any $y_1,...,y_m$, $P^*(y_1,...,y_m|\theta_1)>0$ if and only if $P^*(y_1,...,y_m|\theta_2)>0$.
\end{lem}
\begin{proof}
If $P^*(\cdot|\theta_1)$ and $P^*(\cdot|\theta_2)$ have different supports, then there exists $\mathbf{y}$ that induces a point-mass belief either on state $\theta_1$ or $\theta_2$. So the corresponding Zonotope $\Lambda_{P^*}$ will include either a point $(x,0)$ or $(0,x)$ for some $x>0$.  Since  $P_j(y_j|\theta)>0$ for all $j,y_j,\theta$, none of the Zonotopes $\Lambda_{P_j}$ contains such points. From \cref{lem:blackwellsup}, $\Lambda_{P^*}=\co(\Lambda_{P_1}\cup \dots \cup \Lambda_{P_m})$, which also should not contain such points, leading to a contradiction.
\end{proof}

\subsubsection{Proof of the Theorem}

\begin{proof}[Proof of Uniqueness for \cref{thm:binarybinary}]

Let $\sigma^*$ be a robustly optimal strategy in the decision problem $(A^{bi},u^{bi})$. We have\[V(P_1,...,P_m;(A^{bi},u^{bi}))=\min_{P\in \mathcal{J}(P_1,...,P_m)} \sum_\theta P(\mathbf{y}|\theta)u^{bi}(\theta,\sigma^*(\mathbf{y})).\]
This is a state-by-state optimal transport problem, and so
the corresponding dual problem is 
\[\max_{\phi_j:\Theta\times Y_j\rightarrow \mathbb{R},\, j=1,...,m} \sum_\theta \sum_j  \sum_{y_j} \phi_j(\theta,y_j)P_j(y_j|\theta) \]
\[s.t. \quad \sum_{j=1}^m \phi_j(\theta,y_j)\leq u^{bi}(\theta,\sigma^*(\mathbf{y}))\quad \forall \theta, \mathbf{y}.\]
Or in vector form:
\[\max_{\phi_j: Y_j\rightarrow \mathbb{R}^{|\Theta|},\, j=1,...,m} \sum_j  \sum_{y_j} \phi_j(y_j)\cdot \mathbf{P}_j(y_j) \]
\[s.t. \quad \sum_{j=1}^m \phi_j(y_j)\leq u^{bi}(\cdot,\sigma^*(\mathbf{y}))\quad \forall \mathbf{y}.\]

Let $\{\phi_j^*\}_{j=1}^m$ be a solution to the dual problem. Define $D_j=co(\{\phi_j^*(y_j)|y_j\in Y_j\})-\mathbb{R}_+^2$ for $j=1,...,m$. Note that $D_1+\cdots +D_m\subseteq \po(A^{bi},u^{bi})$, so $\{D_j\}_{j=1}^m$ is dominated by $\po(A^{bi},u^{bi})$, and satisfies 
\begin{align*}
	\sum_{j=1}^m W(P_j;D_j)&\geq  \sum_{j=1}^m  \sum_{y_j} \phi_j^*(\cdot,y_j)\cdot \mathbf{P_j}(y_j)\\
	&=V(P_1,...,P_m;(A^{bi},u^{bi}))\\
	&=W(P_1,...,P_m;\po(A^{bi},u^{bi})).
\end{align*}
From \cref{lem:other-trivial}, $cov(D_2),...,cov(D_m)$ must be trivial, and property 4 of \cref{lem:cov-properties} implies that  for each $j\neq 1$, there exists $y_j^*$ such that $\phi^*_j( y^*_j)\geq \phi^*_j( y_j)$ for all $y_j$. Now we define $\tilde\phi_j(y_j)=\phi^*_j( y^*_j)$ for all $y_j$ as a constant function. Since $\tilde\phi_j(y_j)\geq \phi_j^*(y_j)$ and $\phi_1^*,\tilde\phi_2,...,\tilde\phi_m$ is feasible in the dual problem, $\phi_1^*,\tilde\phi_2,...,\tilde\phi_m$ is also a solution to the dual problem.

From \cref{lem:relaxed-problem} and \cref{cor:value-supremum}, a Blackwell supremum $P^*\in \mathcal{J}(P_1,...,P_m)$ solves Nature's MinMax Problem. From the minmax theorem, $P^*$ is a solution to 
\[\min_{P\in \mathcal{J}(P_1,...,P_m)} \sum_\theta P(\mathbf{y}|\theta)u^{bi}(\theta,\sigma^*(\mathbf{y})).\]
\cref{lem:common-support} implies that $P^*(\cdot|\theta_1)$ and $P^*(\cdot|\theta_2)$ have a common support, which we denote by $\bar{\mathbf{Y}}=\{\mathbf y\in \mathbf Y, \mathbf{\overline{P}}(\mathbf{y})>0\}$.

Now for any $(y_1,\bar{y}_{-1})\in \bar{Y}$, complementary slackness implies 
\[\phi_1^*(\cdot,y_1)+\sum_{j=2}^m \tilde \phi_j(\cdot,\bar{y}_j)= u^{bi}(\cdot,\sigma^*(y_1,\bar{y}_{-1})).\]
For any $(y_1,y_{-1})\in Y$, the dual constraint says
\[\phi_1^*(\cdot,y_1)+\sum_{j=2}^m \tilde \phi_j(\cdot,y_j)\leq  u^{bi}(\cdot,\sigma^*(y_1,{y}_{-1})).\]
Since $\tilde\phi_j$ is constant for $j\geq 2$, the left-hand-side of the two equations above are the same, which implies $u(\cdot,\sigma^*(y_1,\bar y_{-1}))\leq u(\cdot,\sigma^*(y_1, y_{-1}))$. Since $(A^{bi},u^{bi})$ is a non-trivial binary-action decision problem, any two (mixed) actions are either identical or induce payoff vectors that are not ordered. Therefore, $u^{bi}(\cdot,\sigma^*(y_1,\bar y_{-1}))\leq u^{bi}(\cdot,\sigma^*(y_1, y_{-1}))$ implies $\sigma^*(y_1,\bar y_{-1})=\sigma^*(y_1, y_{-1})$. So we have derived that for any $y_1\in Y_1$ and $y_{-1},y_{-1}'\in Y_{-1}$, $\sigma^*(y_1, y_{-1})=\sigma^*(y_1, y_{-1}')$, which concludes the proof.

\end{proof}

\end{document}